\newcommand{\bea}{\begin{eqnarray}}
\newcommand{\eea}{\end{eqnarray}}
\def\bi{\begin{itemize}}
\def\ei{\end{itemize}}
\def\bc{\begin{center}}
\def\ec{\end{center}}
\def\C{\hbox{$\mit I$\kern-.7em$\mit C$}}
\def\R{\hbox{$\mit I$\kern-.6em$\mit R$}}
\def\ket#1{|#1\rangle}
\newcommand{\one}{\mbox{$1 \hspace{-1.0mm}  {\bf l}$}}
\def\ket#1{\left| #1\right>}
\newcommand{\braket}[2]{\left< #1 \vphantom{#2} \right| \left. #2 \vphantom{#1} \right>}
\newcommand{\braopket}[3]{\left< #1 \vphantom{#2#3} \right| #2 \left| #3 \vphantom{#1#2} \right>}
\newcommand{\hilbert}{\mathcal{H}}
\newcommand{\identity}{\one}
\newcommand{\trace}[2][]{\text{tr}_{#1}\left( #2 \right)}
\newcommand{\Span}[1]{\operatorname{span}\left\{ #1 \right\}}
\newcommand{\conv}[1]{\operatorname{conv}\left\{ #1 \right\}}
\renewcommand{\vec}[1]{\bm{#1}}
\newtheorem{theorem}{Theorem}
\newtheorem{corollary}[theorem]{Corollary}
\newtheorem{observation}[theorem]{Observation}
\newtheorem{lemma}[theorem]{Lemma}
\begin{document}
\author{M. Hebenstreit}
\affiliation{Institute for Theoretical Physics, University of
Innsbruck, Innsbruck, Austria}
\author{C. Spee}
\affiliation{Institute for Theoretical Physics, University of
Innsbruck, Innsbruck, Austria}
\author{B. Kraus}
\affiliation{Institute for Theoretical Physics, University of
Innsbruck, Innsbruck, Austria}
\title{The MES of tripartite qutrit states and pure state separable transformations which are not possible via LOCC}

\begin{abstract}
Entanglement is the resource to overcome the restriction of operations to  Local Operations assisted by Classical Communication (LOCC). The Maximally Entangled Set (MES) of states is the minimal set of n--partite pure states with the property that any truly n--partite entangled pure state can be obtained deterministically via LOCC from some state in this set. Hence, this set contains the most useful states for applications. In this work we characterize the MES for generic three qutrit states. Moreover, we analyze which generic three qutrit states are reachable (and convertible) under LOCC transformations. To this end we study reachability via separable operations (SEP), a class of operations that is strictly larger than LOCC. Interestingly, we identify a family of pure states that can be obtained deterministically via SEP but not via LOCC. To our knowledge these are the first examples of transformations among pure states that can be implemented via SEP but not via LOCC.
\end{abstract}
\maketitle

\section{Introduction}
In many quantum informational tasks such as quantum cryptography \cite{article:ekert91} and quantum secret sharing \cite{article:hillery91,article:gottesman} a scenario is considered where several parties are spatially separated and they can only communicate classically. In this scenario the parties are restricted to Local Operations and Classical Communication (LOCC). LOCC transformations and its structure have been studied extensively \cite{article:horodecki09,*[{}] [{ and references therein.}] article:everything}. The interest in these class of operations is not only due to its correspondence to the very natural physical setup mentioned above but the concept of LOCC is strongly interconnected with entanglement. In particular, entanglement is a resource when the operations are restricted to LOCC, i.e. entanglement is non-increasing under  LOCC and it can be used to overcome the restrictions set by LOCC. Entanglement plays also an important role in quantum computation (see e.g. \cite{book:jozsa98, article:briegel01}) and other fields of physics such as condensed matter physics \cite{* [{For a review on multipartite systems see for instance }] [{ and references therein.}] article:amico08}. In the bipartite case entanglement is largely understood \cite{article:horodecki09}, e.g. several entanglement measures and the states that are maximally entangled are known. This is due to the fact that in the case of bipartite pure states one can show that any LOCC transformation that is possible can be achieved by a very simple protocol \cite{article:lo01}. This allowed for a characterization of LOCC transformation between bipartite pure states known as Nielsen\textquoteright s majorization criterion \cite{article:nielsen}. \\
LOCC convertibiliy among  multipartite entangled states has only been studied for small system sizes \cite{article:turgut10,article:kintas10,article:Tajima13,article:mes,article:spee15}.
This is due to the fact that in general the structure of LOCC transformations is very subtle. It has been shown that some tasks require infinitely many rounds \cite{article:infinite}. As the mathematical characterization of LOCC transformations is so challenging a different class of operations called separable maps (SEP) that include LOCC have been studied (see \cite{article:gour} and references therein).  Although mathematically easy to characterize, SEP lacks a clear physical interpretation as it has been shown that SEP is strictly larger than LOCC \cite{article:seplocc, article:everything}, i.e., there exist separable maps that can not be implemented via LOCC. Moreover, state transformations involving mixed states have been found that can be achieved via SEP but not via LOCC \cite{article:chitambar12, article:chitambar09}. Furthermore, there exist entanglement monotones, i.e. quantities that are non-increasing on average under LOCC which have been shown to increase under SEP \cite{article:chitambar09}. Other classes of operations that have been studied are local unitaries and stochastic LOCC-operations. Local Unitaries (LUs) constitute invertible deterministic LOCC transformations and therefore interconnect states with the same amount of entanglement. Necessary and sufficient conditions for two pure $n$-qubit states to be LU-equivalent have been derived in \cite{article:krausprl10, article:krauspra10}. Under Stochastic LOCC-operations (SLOCC-operations) states that can be transformed into each other with non-zero probability  are grouped together \cite{article:slocc,article:4qubits,article:moduli}. LU- and SLOCC-classes are equivalence classes and do not allow to identify which states are more useful than the others under the restriction to LOCC. \\
 Thus, despite all the difficulties one faces when investigating LOCC transformations its characterization is necessary in order to understand and quantify multipartite entanglement. In particular, the knowledge about which deterministic LOCC transformations are possible allows to compute operational entanglement measures \cite{article:schwaiger15, article:sauerwein15} and identify the most useful states for applications. In the bipartite case the maximally entangled state, i.e. the state which can not be obtained via LOCC (excluding LUs) from any other state but any other state can be reached via LOCC from this state, and several applications based on it such as teleportation \cite{article:teleportation} and cryptography \cite{article:ekert91} are well known. In the multipartite setting it is no longer a single state but a set of states, the Maximally Entangled Set (MES) \cite{article:mes} that is an optimal resource under LOCC. In particular, the MES of $n$ parties is given by the minimal set of pure $n$-partite states such that any other truly $n$-partite entangled pure state can be reached via LOCC from a state in the MES. Hence, the MES contains the most useful states for applications. In particular, any protocol can be based on a state in the MES (as any state can be reached via deterministic LOCC from a state in the MES) and thus identifying the MES could guide the path to new applications. \\
In \cite{article:mes,  article:spee15} we have identified the MES for three and four qubit states. Whereas for three qubits the MES is of measure zero and any state can be converted via LOCC deterministically into some other state, one observes a completely different picture for four qubits. In this case the MES is of full measure. This is due to the fact that almost all states are isolated, i.e. they can neither be reached nor converted via deterministic LOCC transformations (excluding LUs). The rare instances of non-isolated states that are in the MES are of particular interest as these are the most useful states for entanglement manipulation. They constitute a zero measure subset and exhibit a particularly simple parametrization. In the three and generic four qubit case we found that all transformations between pure truly multipartite entangled states that can be done via SEP can also be implemented via LOCC \cite{article:mes,article:schwaiger15, article:sauerwein15}. As the picture changes drastically when going from the three qubit to the four qubit case the question arises whether an increase in the local dimension would reveal a different behavior and new insight into the possible structures of LOCC transformations. This is precisely the motivation of this work, where we identify the transformability properties of generic three qutrit states under SEP and under LOCC. In particular, we identify the MES. \\
Interestingly, we find that, although the three qutrit case resembles very much the generic four qubit case, it brings a new feature into the play. In particular, we observe the first examples of pure state transformations that can be implemented via SEP but not via LOCC. Moreover, these states are reachable via SEP but not via LOCC (even if one allows for infinitely many rounds in the protocol).\\

The outline of this paper is the following: First we present the notation, the precise definition of the MES and the methods that we use in order to characterize SEP and LOCC convertibility. Then we show which generic three qutrit states can be reached via SEP and we present the first example of transformations between pure states that can be accomplished via SEP but not via LOCC. Finally, we characterize the MES for generic three qutrit states and show which states are convertible via LOCC.

\section{Notation and known results}
Let us first generalize the term truly $n$-partite entanglement to higher-dimensional systems. For a $n$-partite qubit state, the state is either truly $n$-partite entangled (all reduced density matrices are of full rank) or the state is biseparable. For $d$-dimensional systems with $d>2$ the grading is finer as the ranks of the reduced density matrices can range from $1$ to $d$.
We call a state $\ket{\phi} \in \hilbert$, where $\hilbert = \mathbb{C}^{d_1} \otimes \mathbb{C}^{d_2} \otimes \ldots \otimes \mathbb{C}^{d_n}$ truly $n$-partite, $\vec{c}$-level entangled, where $\vec{c} = (c_1, c_2, \ldots, c_n)$ with $c_i \leq d_i$, if the reduced density matrix of system $i$ has rank $c_i$ for all $i \in \left\{1, \ldots, n\right\}$.

Equivalence classes containing states which are equivalent under Stochastic LOCC (SLOCC) are called SLOCC classes. Two states $\ket{\phi}$ and $\ket{\psi}$ are equivalent under SLOCC if they can be transformed into each other with non vanishing probability of success using local operations, i.e., if there exists a LOCC protocol in which at least in one branch $\ket{\phi}$ is transformed into $\ket{\psi}$ and vice versa. Here and in the following one branch is determined by a specific sequence of measurement outcomes.
Hence, two states $\ket{\phi}$ and $\ket{\psi}$ are equivalent under SLOCC iff they are related by local invertible operators, i.e., $\ket{\phi} \propto A \otimes B \otimes C \ket{\psi}$, where $A$, $B$, and $C$ are invertible.
We will investigate deterministic LOCC transformations between generic three qutrit states that are truly $3$-partite, $\vec{d}$-level entangled for $\vec{d} = (3,3,3)$. In this case, LOCC transformations are only possible within SLOCC classes. We denote by $\ket{\psi}$ a particular representative (see below) of a SLOCC class, which we call a seed state in the following. Moreover, we will denote the initial state by $g \ket{\psi}$ and the final state by $h \ket{\psi}$, where $g$ and $h$ are local, invertible operators, i.e., $g = g_1 \otimes g_2 \otimes g_3$ and $h = h_1 \otimes h_2 \otimes h_3$ with $g_i, h_i \in \operatorname{GL}(3, \mathbb{C})$. Furthermore, let us introduce the notation $G = G_1 \otimes G_2 \otimes G_3 = g^\dagger g$ and $H = H_1 \otimes H_2 \otimes H_3 = h^\dagger h$.

Let us now review the definition of the MES and the methods used to determine the MES for 3 qubit and generic 4 qubit states, which are also applicable to the three qutrit case.

In the bipartite $d$-level setting the maximally entangled state $\ket{\phi^+_d} = \frac{1}{\sqrt{d}} \sum_{i=0}^{d-1} \ket{i i}$ plays a very special role as this state, on the one hand, cannot be reached from any LU-inequivalent state via deterministic LOCC and, on the other hand, can be converted into any other state via deterministic LOCC. This is due to the fact that the vector containing the Schmidt coefficients of $\ket{\phi^+_d}$ is majorized by any other vector containing sorted Schmidt coefficients, and therefore from the majorization criterion \cite{article:nielsen} it follows that $\ket{\phi^+_d}$ can indeed be converted into any other state. Moreover, the majorization criterion implies that this state cannot be reached from any other LU-inequivalent state.
In contrast to the bipartite case, there is no such maximally entangled state which can be transformed to any other state via LOCC in the multipartite setting. However, the idea of the MES is precisely to generalize the notion of \emph{maximal entanglement} to the multipartite scenario \cite{article:mes}. There, the term maximal entanglement is not associated to a single state, but to a set of states, the \emph{maximally entangled set}.	
	It is defined as follows.
	The $MES_{\vec{d}}$ is a set of truly $n$-partite, $\vec{d}$-level entangled states with the following two properties.
		\begin{enumerate}[(i)]
			\item No state in $MES_{\vec{d}}$ can be reached from any other truly $n$-partite, $\vec{d}$-level entangled, LU-inequivalent state via LOCC and
			\item for any pure, truly $n$-partite, $\vec{d}$-level entangled entangled state $\ket{\phi}$, there exists a state $\ket{\psi}$ in the $MES_{\vec{d}}$ such that $\ket{\psi}$ can be transformed into $\ket{\phi}$ via LOCC.
		\end{enumerate}
		The $MES_{\vec{d}}$ is hence the minimal \emph{set} of truly $n$-partite, $\vec{d}$-level entangled states which is maximally useful for quantum information tasks.
	
	To characterize the MES, one has to know which LOCC transformations are possible. In contrast to the bipartite case, where all possible LOCC transformations have been characterized by Nielsen \cite{article:nielsen}, such a characterization is much more demanding in the multipartite setting. This is due to the complex structure of LOCC, including possibly infinitely many rounds of communication \cite{article:infinite}.
	Despite these difficulties it is essential to know which LOCC transformations are possible to characterize entanglement as it is the resource to overcome LOCC.
In order to characterize LOCC transformations we will consider convertibility under Separable Operations (SEP), a class of quantum operations which is strictly larger than LOCC \cite{article:seplocc}.
		A completely positive, trace preserving map $\Lambda$ that can be written as
		\begin{align}
			\Lambda(\rho) = \sum_k M_k \rho M_k^\dagger \ \text{with} \  M_k = A_1^{(k)} \otimes   \cdots \otimes A_n^{(k)},
		\end{align}
		is called separable. All such $\Lambda$ form the set of Separable Operations (SEP).

In contrast to LOCC, separable operations are mathematically much more tractable, and as SEP contains LOCC,  SEP-convertibility of a state $\ket{\psi}$ to a state $\ket{\phi}$ is a necessary condition for LOCC-convertibility. However, it has been shown that LOCC is strictly contained in SEP \cite{article:seplocc}, i.e., there exist SEP-maps which cannot be implemented via LOCC. Thus, the existence of a SEP transformation is a necessary but not a sufficient condition for the existence of a corresponding LOCC transformation. Indeed, in section \ref{sec:sepneqlocc}, we will give examples for pure state transformations which are possible via SEP but not via LOCC. Note that we cannot leave the SLOCC class with LOCC transformations without going to a state which is not truly $n$-partite, $\vec{d}$-level entangled any more.

Let us now review the results on SEP obtained in \cite{article:gour}. In order to state the necessary and sufficient conditions for the existence of SEP transformations within SLOCC classes, which has been derived there, we denote by $S(\ket{\psi}) = \left\{S: S \ket{\psi} = \ket{\psi}, \;S = S_1 \otimes \ldots \otimes S_n, \;S_i \in \operatorname{GL}(d_n,\mathbb{C})\right\}$ the set of local symmetries of the state $\ket{\psi}$.
\begin{theorem}[\cite{article:gour}]
	\label{theo:gour}
			Let $\ket{\psi_1} = g \ket{\psi} \in \hilbert$ and $\ket{\psi_2} = h \ket{\psi} \in \hilbert$. Then
			$\ket{\psi_1}$ can be converted into $\ket{\psi_2}$ via SEP iff there exists a finite index set $I$, probabilities $\left\{ p_k\right\}_{k \in I}$ with $p_k \geq 0$ and $\sum_{k\in I} p_k = 1$  and symmetries of the seed state $\mathcal{S}_k \in S(\ket{\psi})$ such that
			\begin{align}
				\label{equ:gourgen}		
				\sum_{k \in I} p_k \mathcal{S}_k^\dagger H \mathcal{S}_k = r G
			\end{align}
			with $H = H_1 \otimes \ldots \otimes H_n =  h^\dagger h$, $G =  G_1 \otimes \ldots \otimes G_n  = g^\dagger g$, and $r = \left|\left| \, \ket{\psi_2 }\,\right|\right|^2 / \left| \left| \,\ket{\psi_1 }\,\right|\right|^2$.
		\end{theorem}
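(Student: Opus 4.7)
The plan is to prove the two directions of the equivalence by exhibiting an explicit correspondence between separable Kraus operators and local symmetries of the seed state $\ket{\psi}$. For the sufficiency direction, suppose one is given $\mathcal{S}_k \in S(\ket{\psi})$ and probabilities $\{p_k\}_{k\in I}$ satisfying Eq.~(\ref{equ:gourgen}), and define $M_k \equiv \sqrt{p_k/r}\, h\mathcal{S}_k g^{-1}$. These are product operators because $g$, $h$, and every $\mathcal{S}_k$ factor as tensor products, and they implement a deterministic pure-to-pure SEP protocol: a direct computation gives $M_k \ket{\psi_1} = \sqrt{p_k/r}\, h\mathcal{S}_k \ket{\psi} = \sqrt{p_k/r}\,\ket{\psi_2}$ for every $k$, and completeness follows from
\begin{align*}
\sum_k M_k^\dagger M_k = \frac{1}{r}\,g^{-\dagger}\!\left(\sum_k p_k \mathcal{S}_k^\dagger H \mathcal{S}_k\right)\!g^{-1} = g^{-\dagger}G\,g^{-1} = \one.
\end{align*}

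For the necessity direction, I start from an arbitrary SEP protocol realizing $\ket{\psi_1}\to\ket{\psi_2}$, i.e.\ Kraus operators $M_k = A_1^{(k)}\otimes\cdots\otimes A_n^{(k)}$ with $\sum_k M_k^\dagger M_k = \one$ and $M_k \ket{\psi_1} = \alpha_k \ket{\psi_2}$ for scalars $\alpha_k$. For branches with $\alpha_k \neq 0$, the product operator $\alpha_k^{-1} h^{-1} M_k g$ fixes $\ket{\psi}$; because $\ket{\psi}$ has full local rank, any such product stabilizer must be invertible on each tensor factor (a noninvertible local factor would strictly decrease the rank of the corresponding reduced state), hence it lies in $S(\ket{\psi})$ and I call it $\mathcal{S}_k$, so that $M_k = \alpha_k h \mathcal{S}_k g^{-1}$. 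Taking the expectation value of $\sum_k M_k^\dagger M_k = \one$ in $\ket{\psi_1}$ yields $\sum_k |\alpha_k|^2 = 1/r$, so $p_k \equiv r |\alpha_k|^2$ is a probability distribution; substituting the explicit form of $M_k$ into the completeness identity then reproduces exactly Eq.~(\ref{equ:gourgen}).

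The main obstacle is handling degenerate branches with $\alpha_k = 0$, on which the argument above does not directly produce a symmetry. I would dispose of them by a coarse-graining argument: their combined contribution $E = \sum_{k:\alpha_k = 0} M_k^\dagger M_k$ is a positive operator that annihilates $\ket{\psi_1}$, and exploiting that $\ket{\psi_1}$ has full local rank one can either redistribute such branches among the generic $\alpha_k \neq 0$ ones or regard them as additional separable Kraus elements carrying zero weight, so the remaining protocol still satisfies completeness and involves only symmetries of $\ket{\psi}$, at which point the preceding calculation applies unchanged. A secondary technical point, used implicitly above, is that any product stabilizer of a truly $(3,3,3)$-level entangled state automatically factors into invertible local operators, which is what is required to identify it as a bona fide element of $S(\ket{\psi})$.
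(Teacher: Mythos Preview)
The paper does not prove this theorem; it is quoted from \cite{article:gour} and used as a tool. So there is no ``paper's proof'' to compare against, and your proposal must be assessed on its own merits.

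Your sufficiency direction is clean and correct, and the map you write down is exactly the one the paper records in Eq.~(\ref{equ:sepmap}).

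Your necessity direction, however, has a genuine gap in the treatment of branches with $\alpha_k=0$. Running your own computation through carefully: from $\sum_k M_k^\dagger M_k=\one$ and $M_k=\alpha_k h\mathcal{S}_k g^{-1}$ for the $\alpha_k\neq 0$ branches one obtains
\[
\sum_{k:\alpha_k\neq 0} p_k\,\mathcal{S}_k^\dagger H \mathcal{S}_k \;=\; r\bigl(G - g^\dagger E g\bigr),
\qquad E=\sum_{k:\alpha_k=0} M_k^\dagger M_k,
\]
so Eq.~(\ref{equ:gourgen}) follows only if $E=0$. You assert this can be arranged by ``redistributing'' the null branches or by carrying them as zero-weight terms, but neither option works as stated: zero-weight terms contribute nothing to the left-hand side of Eq.~(\ref{equ:gourgen}) yet still contribute $g^\dagger E g$ to the completeness relation, and ``redistributing'' $E$ among the surviving Kraus operators would destroy their product form $\alpha_k h\mathcal{S}_k g^{-1}$ unless $g^\dagger E g$ itself happens to be a nonnegative combination of operators $\mathcal{S}^\dagger H\mathcal{S}$, which you have not shown. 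The claim that full local rank of $\ket{\psi_1}$ forces $E=0$ is also not justified: a nonzero product operator \emph{can} annihilate a state with full local ranks (already bipartitely, $\ket{0}\!\bra{0}\otimes\ket{1}\!\bra{1}$ kills $\ket{00}+\ket{11}$), so full rank alone does not exclude nontrivial null branches. Closing this gap is exactly the delicate part of the argument in \cite{article:gour}; your sketch identifies the right objects but does not supply the mechanism that disposes of $E$.
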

		Note the important role the symmetries of the seed state play in SEP transformations.		
In order to transform $\ket{\psi_1}$ into $\ket{\psi_2}$ via SEP, the parties have to implement a POVM whose elements depend on the matrices $G_i$ and $H_i$, as the explicit SEP map which transforms the state $\ket{\psi_1}$ to $\ket{\psi_2}$ is given by
\begin{align}
	\label{equ:sepmap}
	\Lambda_{SEP}\left( \rho \right) = \sum_{k \in I} M_k \rho M_k^\dagger,\ \text{where} \  M_k = \frac{1}{\sqrt{r}} \sqrt{p_k} h \mathcal{S}_k g^{-1}.
\end{align}
We will see later on that if such a transformation is possible via LOCC, then in order to transform $\ket{\psi_1}$ into $\ket{\psi_2}$ the parties can locally implement the POVM corresponding to the given SEP map.

In order to determine the MES for generic three qutrit states, we will follow the methods introduced in \cite{article:mes} for the four qubit case. Using the characterization of generic three qutrit SLOCC classes in \cite{article:moduli} and the representatives (seed states) given therein, we will first determine the symmetries of the seed states. We will find that there are, as in the four qubit case, only finitely many and that all of them are unitary. Using these facts we will then derive simple necessary conditions for SEP-convertibility by calculating partial traces over Eq. (\ref{equ:gourgen}). We will introduce a standard form for generic three qutrit states. Subsequently, we will use the derived conditions as well as the standard form to characterize states which are reachable via SEP. Using this characterization, we will show which of the states that are reachable via SEP are indeed reachable via LOCC by constructing the corresponding LOCC protocols which are of a simple form. However, we will show, that, surprisingly, a family of states which can be reached via SEP cannot be reached via LOCC, even after infinitely many rounds. We will use these results on LOCC to derive the MES for generic tripartite qutrit states. Finally, we will characterize states which are convertible under LOCC and identify isolated states, i.e., states which can neither be reached from, nor converted to another LU-inequivalent state.

\section{Seed states and their symmetries}
\label{sec:seed}
	In this section, we will first present the seed states of the generic SLOCC classes derived in \cite{article:moduli} and determine their local symmetries. Then we will derive the necessary and sufficient conditions for SEP-convertibility.

	Recall that states $\ket{\phi}$ and $\ket{\psi}$ are in the same SLOCC class iff they are related by local invertible operators, i.e., $\ket{\phi} \propto A \otimes B \otimes C \ket{\psi}$, where $A, B, C \in \text{SL}(3, \mathbb{C})$. In \cite{article:moduli}, it has been shown that generic states belong to SLOCC classes with the seed states
\begin{align}
	\label{equ:seed}
			\ket{\psi} &=  a \left(\ket{000} + \ket{111} + \ket{222}\right) \nonumber\\
				      &\quad +	 b \left(\ket{012} + \ket{201} + \ket{120}\right)  \nonumber\\
				      &\quad +	 c \left(\ket{021} + \ket{210} + \ket{102}\right),
		\end{align}
 		where $a, b, c \in \mathbb{C}$. Therefore, the generic SLOCC classes of three qutrit states are characterized by 4 real parameters.

In the following lemma, we give the symmetries, $S(\ket{\psi})$, of the seed states $\ket{\psi}$ of the generic SLOCC classes. Note that several choices of the parameters $a$, $b$, and $c$ have to be excluded as there possibly exist additional symmetries in these cases. We exclude the cases where $a=0$, $b=0$, $c=0$, $a^3 + b^3 + c^3 = 0$, $(a^3 + b^3 + c^3)^3 = (3 a b c)^3$, $a^9=b^9$, $a^9=c^9$, $b^9=c^9$, $a+b+c=0$, $a+ \omega b+c=0$, $a+ \omega^2 b+c=0$, $a+b+ \omega c=0$, $a+b+ \omega^2 c=0$, $a+\omega b+ \omega^2 c=0$, $a+\omega^2 b+ \omega c=0$, $a b + b c  + c a = 0$, $a b + \omega b c  + c a = 0$, $a b + \omega^2 b c  + c a = 0$, $a b + b c  + \omega c a = 0$, $a b + b c  + \omega^2 c a = 0$, $a b + \omega b c  + \omega^2 c a = 0$, or $a b + \omega^2 b c  + \omega c a = 0$ 
where $\omega = e^{i\frac{2\pi}{3}}$. In the following, we will call a 3 qutrit seed state (3 qutrit state) generic if it is (belongs to a SLOCC class of some seed state) given in Eq. (\ref{equ:seed}) where none of the above conditions is fulfilled, respectively.

\begin{lemma}
\label{lemma:symmetries}
Let $\ket{\psi}$ be a generic three qutrit seed state. The symmetries of $\ket{\psi}$ are given by the so-called generalized Pauli matrices,
		\begin{align}		
				S(\ket{\psi})= &\left\{\identity_3^{\otimes 3}, (X)^{\otimes 3}, (X^2)^{\otimes 3}, (Z)^{\otimes 3}, (Z^2)^{\otimes 3}, (XZ)^{\otimes 3}, \right.\nonumber\\
						 &\quad  \left.  (XZ^2)^{\otimes 3}, (X^2Z)^{\otimes 3}, (X^2Z^2)^{\otimes 3}   \right\},
		\end{align}
		where
		\begin{align}
				X = \begin{pmatrix}
					 0 & 1 & 0 \\
					 0 & 0 & 1 \\
					 1 & 0 & 0
					 \end{pmatrix} 	
				\quad\text{and}\quad
				Z = \begin{pmatrix}
					 1 & 0 & 0 \\
					 0 & \exp{i\frac{2\pi}{3}} & 0 \\
					 0 & 0 & \exp{i\frac{4\pi}{3}}
					 \end{pmatrix}. 	
		\end{align}
\end{lemma}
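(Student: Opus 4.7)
The plan is to split the proof into a straightforward verification that the nine operators are symmetries and a harder uniqueness argument, the latter leaning on the group structure of the nine known symmetries and on a subspace/invariant analysis.

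For the verification part, I would exploit the single fact that every basis ket in Eq.~(\ref{equ:seed}) has $i+j+k\equiv 0\pmod 3$ and that the coefficient is constant ($a$, $b$, or $c$) on each of the three cyclic-shift orbits $\{000,111,222\}$, $\{012,120,201\}$, $\{021,102,210\}$. Then $Z^{\otimes 3}$ multiplies each term by $\omega^{i+j+k}=1$, while $X^{\otimes 3}$ cyclically permutes each orbit, so both fix $\ket{\psi}$. Their nine products $(X^iZ^j)^{\otimes 3}$ are therefore symmetries, pairwise distinct (they act differently on $\ket{000}$), and generate a group isomorphic to $\mathbb{Z}_3\times\mathbb{Z}_3$.

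For the uniqueness part, let $S=S_1\otimes S_2\otimes S_3\in S(\ket{\psi})$. Writing $\ket{\psi}=\sum_j\ket{j}_1\otimes\ket{v_j}_{23}$ with $\ket{v_j}$ the party-$1$ Schmidt components, a short computation shows that the $\ket{v_j}$ are mutually orthogonal with equal norms. Hence $V_\psi=\mathrm{span}\{\ket{v_0},\ket{v_1},\ket{v_2}\}\subset\mathbb{C}^3\otimes\mathbb{C}^3$ is intrinsically defined and must be mapped to itself by $S_2\otimes S_3$; analogous subspaces arise from the other two bipartitions and impose analogous constraints on $S_1\otimes S_3$ and $S_1\otimes S_2$. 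On $V_\psi$, the restriction of $\det:\mathbb{C}^{3\times 3}\to\mathbb{C}$ is a cubic form that must be preserved up to scaling, and for generic $a,b,c$ it cuts out a smooth cubic curve in $\mathbb{P}(V_\psi)$. The stabilizer of such a curve inside the decomposable subgroup $\mathrm{GL}(3)\otimes\mathrm{GL}(3)$ is finite; combining this with the symmetric conditions on the other two pairs should force $S$ to coincide, up to an overall scalar, with one of the nine operators listed.

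The main obstacle is the last finiteness and enumeration step: verifying that the claimed genericity conditions on $a,b,c$ are exactly those ruling out additional symmetries. Each excluded condition corresponds to a geometric degeneration---$a^3+b^3+c^3=0$ and $(a^3+b^3+c^3)^3=(3abc)^3$ mark singular members of the associated Hesse pencil; conditions such as $a^9=b^9$ or $a+\omega b+c=0$ signal coincidences among the cubic's flex points or among the seed-state coefficients that allow extra permutation symmetries. I would finalize the proof either via a direct symbolic computation on the $27$ polynomial equations obtained from expanding $(S_1\otimes S_2\otimes S_3)\ket{\psi}=\ket{\psi}$ in the computational basis, following the computer-algebra template used in the four-qubit case in \cite{article:mes}, or by appealing to the classification of semisimple orbits of $\mathrm{SL}(3,\mathbb{C})^3$ on $(\mathbb{C}^3)^{\otimes 3}$ whose generic stabilizer is exactly $\mathbb{Z}_3\times\mathbb{Z}_3$.
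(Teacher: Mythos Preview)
Your verification that the nine operators are symmetries is correct and matches what the paper does implicitly.

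Your uniqueness argument takes a genuinely different route from the paper. The paper proceeds by direct computation: it introduces nine explicit bipartite vectors $\ket{\phi_i}_{23}$ orthogonal to $\ket{\psi}$, projects the symmetry equation onto them to eliminate $A$, and organizes the resulting constraints on $B$ and $C$ through three auxiliary $3\times 3$ matrices $M_0,M_1,M_2$ built from the rows of $B$. A rank dichotomy on the $M_i$ (either all are invertible or none is) then leads, after a case analysis and the exclusion of the listed degenerate parameter values, to exactly the nine generalized Pauli symmetries. Your approach via the slice cubic and the Hesse pencil is more conceptual and would explain geometrically why the excluded parameter conditions arise---they mark singular fibres or flex coincidences---something the paper's computation does not illuminate.

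However, there is a genuine gap: you do not carry out the finiteness and enumeration step, and neither of your proposed fallbacks is executed. Knowing that the automorphism group of a smooth plane cubic is finite does not by itself yield the stabilizer inside the \emph{decomposable} subgroup $\operatorname{GL}(3,\mathbb{C})\times\operatorname{GL}(3,\mathbb{C})$ acting on $\mathbb{C}^{3\times 3}$, nor does it show how the three bipartition constraints interlock to force $S_1\propto S_2\propto S_3$ and pin each factor down to a generalized Pauli matrix. The paper's $M_i$-machinery is precisely what makes the enumeration concrete: it reduces the problem to matching the eigenvalues of $M_i^{-1}M_j$ against $\{1,\omega,\omega^2\}$ and then checking a finite list of $162$ candidate matrices against the projection equations, with each excluded parameter condition appearing as the obstruction to ruling out a specific candidate. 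Without an analogous explicit device, or a precise citation of a classification result for semisimple $\operatorname{SL}(3,\mathbb{C})^{3}$-orbits giving the generic stabilizer, your outline remains a plausible sketch rather than a proof.
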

This lemma is proven in the appendix. Note that $S(\ket{\Psi})\backslash \left\{\identity_3^{\otimes 3} \right\}$ corresponds to the set of all the generators of SU(3).  Note further that there are only finitely many symmetries, and note that they are unitary (as it was also the case for the generic four qubit SLOCC classes).
	In the following, we will denote the generalized Pauli matrices by $S_{\vec{k}} = X^{k_1} Z^{k_2}$, where $\left( \begin{smallmatrix}  k_1 \\  k_2  \end{smallmatrix}\right) \equiv \vec{k} \in \mathbb{Z}_3^2$. The symmetries are then $\left\{S_{\vec{k}}^{\otimes 3}\right\}_{\vec{k} \in \mathbb{Z}_3^2}$. We will use the standard calculation rules for the vector space $\mathbb{Z}_3^2$, in particular, we denote the inverse element of $\vec{k}$ with respect to addition by $-\vec{k}$.

	  Note that for all $\vec{l}$, $\vec{m}$ we have that
	 \begin{align}
	 	\label{equ:pauligroup}
	 	S_{\vec{l}} S_{\vec{m}} \propto S_{\vec{l}+\vec{m}}
	 \end{align}
	 as $S_{\vec{l}} S_{\vec{m}} = X^{l_1} Z^{l_2}  X^{m_1} Z^{m_2} \propto X^{l_1 + m_1} Z^{l_2 + m_2} =  S_{\vec{l}+\vec{m}}$, where we used that $Z^{l_2} X^{m_1} \propto X^{m_1} Z^{l_2}$. In particular, it holds that
	 \begin{align}
	\label{equ:conjugation}
	 S_{\vec{l}}^\dagger S_{\vec{k}} S_{\vec{l}} = e^{i \phi_{\vec{k} \vec{l}}} S_{\vec{k}},
\end{align}
with $\phi_{\vec{k} \vec{l}} \in \{1, e^{i\frac{2 \pi}{3}}, e^{i\frac{4 \pi}{3}}\}$   for any $\vec{l}$, $\vec{k}$.	
	Note further that considering the scalar product $(A,B) = \trace{A^\dagger B}$, the generalized Pauli matrices  $\left\{ S_{\vec{k}} \right\}_{\vec{k} \in \mathbb{Z}_3^2}$ form an orthogonal basis for the $3\times 3$ matrices.

Due to Eq. (\ref{equ:gourgen}) the operators of interest are the strictly positive operators of the form $G_i = g_i^\dagger g_i$. Thus, without loss of generality we normalize the operators $G_i$ such that $\trace{G_i} = 1$. Obviously, $H_i$ can be normalized in the same way.
Using that the generalized Pauli matrices form an orthogonal basis and that $S_{\vec{0}} = \identity$, where $ \vec{0} = \left( \begin{smallmatrix}  0 \\  0  \end{smallmatrix}\right)$, we can rewrite $G_i$ as
	\begin{align}		
		\label{equ:pauli}
		G_i = \sum_{\vec{k} \in \mathbb{Z}_3^2 } g_{\vec{k}}^{(i)} S_{\vec{k}} = \frac{1}{3} \identity + \sum_{\vec{k} \in \mathbb{Z}_3^2 \backslash \{\vec{0}\}} g_{\vec{k}}^{(i)} S_{\vec{k}}.
	\end{align}
For later use we define the vector of the coordinates of $G_i$ with respect to the basis elements $\left\{ S_{\vec{k}} \right\}_{\vec{k} \neq \vec{0}}$  as
\begin{align}
\label{equ:gvec}
	\vec{g}^{(i)} = \left(  g_{\left( \begin{smallmatrix}  1 \\  0  \end{smallmatrix}\right)}^{(i)}, g_{\left( \begin{smallmatrix}  2 \\  0  \end{smallmatrix}\right)}^{(i)}, g_{\left( \begin{smallmatrix}  0 \\  1  \end{smallmatrix}\right)}^{(i)}, g_{\left( \begin{smallmatrix}  0 \\  2  \end{smallmatrix}\right)}^{(i)}, g_{\left( \begin{smallmatrix}  1 \\  1  	\end{smallmatrix}\right)}^{(i)}, g_{\left( \begin{smallmatrix}  2 \\ 2  \end{smallmatrix}\right)}^{(i)}, g_{\left( \begin{smallmatrix}  2 \\ 1  \end{smallmatrix}\right)}^{(i)} , g_{\left( \begin{smallmatrix}  1 \\  2  \end{smallmatrix}\right)}^{(i)}   \right)^T,
\end{align}
which is contained in $\mathbb{C}^8$.
Using the fact that $G_i$ must be hermitian and $S_{\vec{k}}^\dagger \propto S_{-\vec{k}}$ leads to the observation that the vectors $\vec{g}^{(i)}$ cannot be arbitrary. In fact, defining phases $\nu_{\vec{k}}$ such that
	\begin{align}		
		S_{\vec{k}}^\dagger = e^{i \nu_{\vec{k}}} S_{-\vec{k}},
	\end{align}
we get
	\begin{align}		
		G_i =   \frac{1}{3} \identity + \sum_{\vec{k}} {g_{\vec{k}}^{(i)}}^* S_{\vec{k}}^\dagger
			= \frac{1}{3} \identity + \sum_{\vec{k}}  {g_{-\vec{k}}^{(i)}}^* e^{i \nu_{-\vec{k}}} S_{\vec{k}}.
	\end{align}
Hence, comparing Eq. (\theequation) with Eq. (\ref{equ:pauli}) we have
	\begin{align}		
		\label{equ:nu}
		g_{\vec{k}}^{(i)} = {g_{-\vec{k}}^{(i)}}^*  e^{i \nu_{-\vec{k}}}.
	\end{align}
The vector $\vec{h}^{(i)}$ is defined in an analogous way, and obviously, the same restrictions have to hold.

Let us remark here that if the operator $G_i$ is of the special form  $G_i \in \Span{\identity, S_w, S_{-w}}$ with $\vec{w} \in \mathbb{Z}_3^2$, $\vec{w} \neq \vec{0}$, then also $g_i$ with  $g_i ^\dagger g_i = G_i$ can be chosen such that  $g_i \in \Span{\identity, S_{\vec{w}}, S_{-\vec{w}}}$ as the following observation shows.

\begin{observation}
\label{observation:span}
Let $M \in \text{GL}(3, \mathbb{C})$ be such that $M>0$ and $M^\dagger = M$ and $\vec{w} \in \mathbb{Z}_3^2 \setminus \{\vec{0}\}$ such that $M \in \Span{\identity, S_{\vec{w}}, S_{-\vec{w}}}$. Then there exists $m \in  \text{GL}(3, \mathbb{C})$ such that $m^\dagger m = M$ and  $m \in \Span{\identity, S_{\vec{w}}, S_{-\vec{w}}}$.
\end{observation}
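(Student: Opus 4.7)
The plan is to observe that $\Span{\identity, S_{\vec{w}}, S_{-\vec{w}}}$ is in fact the full commutative $*$-algebra generated by $S_{\vec{w}}$, and then to exhibit $m$ as a polynomial in $S_{\vec{w}}$ of degree at most $2$. Concretely, by Eq.~(\ref{equ:pauligroup}) we have $S_{\vec{w}}^{2} \propto S_{2\vec{w}} = S_{-\vec{w}}$, and moreover $S_{\vec{w}}S_{-\vec{w}} \propto \identity$. Hence $\Span{\identity, S_{\vec{w}}, S_{-\vec{w}}} = \Span{\identity, S_{\vec{w}}, S_{\vec{w}}^{2}} = \mathbb{C}[S_{\vec{w}}]$.

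Next, I would use that $S_{\vec{w}}$ is unitary and satisfies $S_{\vec{w}}^{3} \propto \identity$, so its spectrum consists of three distinct numbers $\mu_0, \mu_1, \mu_2$ (a fixed cube root of the scalar on the right, times the three cube roots of unity). Therefore there exists a unitary $U$ simultaneously diagonalizing every element of $\mathbb{C}[S_{\vec{w}}]$. In particular, $U M U^{\dagger} = \operatorname{diag}(\lambda_0, \lambda_1, \lambda_2)$ with $\lambda_k > 0$ since $M > 0$.

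Now define $m$ to be the unique positive square root of $M$ in the algebra, namely $U m U^{\dagger} = \operatorname{diag}(\sqrt{\lambda_0}, \sqrt{\lambda_1}, \sqrt{\lambda_2})$. It then follows automatically that $m^{\dagger} m = m^{2} = M$ and $m \in \operatorname{GL}(3,\mathbb{C})$. To conclude $m \in \Span{\identity, S_{\vec{w}}, S_{-\vec{w}}}$, I would invoke polynomial interpolation (Lagrange/Vandermonde): since $\mu_0, \mu_1, \mu_2$ are pairwise distinct, the system
\begin{align}
m_0 + m_1 \mu_k + m_2 \mu_k^{2} = \sqrt{\lambda_k}, \quad k = 0,1,2,
\end{align}
has a unique solution $(m_0, m_1, m_2) \in \mathbb{C}^{3}$. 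The matrix $m_0 \identity + m_1 S_{\vec{w}} + m_2 S_{\vec{w}}^{2}$ is then diagonalized by the same $U$ and agrees with $m$ on the spectrum, so it equals $m$, placing $m$ in $\Span{\identity, S_{\vec{w}}, S_{\vec{w}}^{2}} = \Span{\identity, S_{\vec{w}}, S_{-\vec{w}}}$.

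The only slightly nontrivial checkpoint is the claim that $S_{\vec{w}}$ has three distinct eigenvalues for every $\vec{w} \neq \vec{0}$; everything else is elementary linear algebra in a commutative semisimple algebra. This is handled by the observation that $S_{\vec{w}}^{3}$ is a scalar multiple of $\identity$ (a direct consequence of the clock--shift relation $ZX = \omega XZ$), together with the fact that $S_{\vec{w}} \neq \lambda \identity$ for $\vec{w} \neq \vec{0}$, which forces the minimal polynomial to be $x^{3}-\alpha$ and hence the spectrum to consist of three distinct cube roots of $\alpha$.
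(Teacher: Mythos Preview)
Your proof is correct and follows essentially the same route as the paper: simultaneously diagonalize $\identity, S_{\vec w}, S_{-\vec w}$ by a unitary $U$, take the positive square root $m=U^\dagger\sqrt{D}\,U$, and then argue that $\sqrt{D}$ lies in the span of the three diagonalized basis matrices. The only cosmetic difference is that you phrase the last step as Lagrange/Vandermonde interpolation in $\mathbb{C}[S_{\vec w}]$ (and you explicitly justify that $S_{\vec w}$ has three distinct eigenvalues), whereas the paper simply asserts that $\{\identity,D_1,D_1^*\}$ are linearly independent and hence span all diagonal matrices---these are the same observation.
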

\begin{proof}
As $\identity$, $S_{\vec{w}}$, and $S_{-\vec{w}}$ can be simultaneously diagonalized using some unitary $U$ we have that $M$ can be written as $M = U^\dagger D U$, where $D$ is diagonal and all the entries of $D$ are positive, as $M > 0$. In particular, using Eq. (\ref{equ:nu}) we obtain
\begin{align}
	M &= a_1 \identity + a_2 S_{\vec{w}} + a_2^{*} e^{i \nu_{\vec{w}}} S_{\vec{-w}}  \nonumber \\
	&= U^{\dagger} \left(  a_1 \identity + a_2 D_1 + a_2^{*} D_1^{*} \right) U,
\end{align}
where $D_1$ is diagonal. Let us now define $m = U^\dagger \sqrt{D} U$, where $\sqrt{D}$ is a diagonal matrix with the square roots of the elements of $D=a_1 \identity + a_2 D_1 + a_2^{*} D_1^{*}$ in its diagonal. As $\left\{    \identity, D_1, D_1^{*}    \right\}$ are linearly independent, any diagonal matrix, in particular $\sqrt{D}$, can be written as a linear combination of these three matrices. Hence, $m$ can be written as a linear combination of $\identity$, $S_{\vec{w}}$ and $S_{-\vec{w}}$ and $m^\dagger m = M$, which completes the proof.
\end{proof}
We will make use of this observation later on. Moreover, we will denote $g_i$ ($h_i$) which are restricted to $\Span{\identity, S_{\vec{w}}, S_{-\vec{w}}}$ by $g_{\vec{w}}^i $ ($h_{\vec{w}}^i $) respectively.

Let us now introduce a standard form for generic three qutrit states $g \ket{\psi}$. In order to get rid of the local unitary freedom in $g$, we can choose $g$ such that $g>0$ and $g$ is diagonal in the same basis as $G = g^\dagger g$. However, $G$ is not unique, as conjugation of $G$ with a symmetry of the seed state does not change the state we are referring to. This is due to the fact that for all $\vec{l}$ we have $g \ket{\psi} = g \; S_{\vec{l}}^{\otimes 3} \ket{\psi}$ and therefore $G$ and $ \left( S_{\vec{l}}^\dagger \right)^{\otimes 3} G \left(S_{\vec{l}}\right)^{\otimes 3} $ refer to the same state. Conjugation of $G$ with $\left(S_{\vec{l}}\right)^{\otimes 3}$ changes the phases of the entries in the vectors, $\arg{g_{\vec{k}}^{(i)}}$,  by $\phi_{\vec{k} \vec{l}}$ for all $i$. Note that for fixed $\vec{l} \neq \vec{0}$, always two of the entries are unchanged, three are multiplied by a phase $e^{i\frac{2 \pi}{3}}$ and the remaining three are multiplied by a phase $e^{- i \frac{2 \pi}{3}}$ \footnote{Note that the phase $g_{\vec{k}}^{(i)}$ is multiplied with is always the complex conjugation of the phase $g_{-\vec{k}}^{(i)}$ is multiplied with.}. We can use this freedom to bring $G$ into a unique form by restricting the range of the phases for certain entries in the coordinate vector for one particle. We choose here $\arg{g_{\left( \begin{smallmatrix}  1 \\  0  \end{smallmatrix}\right)}^{(1)}} \in \left[ 0, \frac{2 \pi}{3}\right)$ and $\arg{g_{ \left( \begin{smallmatrix}  0 \\  1  \end{smallmatrix} \right)}^{(1)}} \in \left[ 0, \frac{2 \pi}{3}\right)$. Note that in case one or both of those entries vanish one has to go through the other entries (parties) in a certain order and fix the range of the phases for the first non vanishing entries (parties). Furthermore, the seed states can be chosen uniquely. As can be easily seen, this leads to a unique standard form. As any generic state can be transformed into its unique standard form by LU we have the following Lemma.
\begin{lemma}
	Two generic states in $\mathbb{C}^3 \otimes \mathbb{C}^3 \otimes \mathbb{C}^3$ are LU-equivalent iff their standard forms as defined above coincide.
\end{lemma}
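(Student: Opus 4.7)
The plan is to prove the two directions of the iff separately.

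$(\Leftarrow)$ is immediate: if two generic states share the same standard form, then each is LU-equivalent to that common state by construction, so they are LU-equivalent to each other by transitivity.

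For $(\Rightarrow)$, assume $\ket{\phi_1} \sim_{LU} \ket{\phi_2}$. Since LU-equivalence implies SLOCC-equivalence, both states lie in the same generic SLOCC class, and by the unique canonical choice of seed in the standard-form construction both standard forms share the same $\ket{\psi}$. Write them as $g\ket{\psi}$ and $h\ket{\psi}$ with $g,h>0$ local and $G=g^2$, $H=h^2$ satisfying the phase convention on $\vec{g}^{(1)}$ and $\vec{h}^{(1)}$. Transitivity gives $g\ket{\psi} \sim_{LU} h\ket{\psi}$, so there is a local unitary $U$ with $U g \ket{\psi} = h\ket{\psi}$. Then $h^{-1} U g$ leaves $\ket{\psi}$ invariant, and by Lemma~\ref{lemma:symmetries} must equal $S_{\vec{l}}^{\otimes 3}$ for some $\vec{l} \in \mathbb{Z}_3^2$. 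Applying $(\cdot)^\dagger(\cdot)$ to $Ug = h S_{\vec{l}}^{\otimes 3}$, using unitarity of $U$ and $S_{\vec{l}}$ together with positivity of $g,h$, one obtains
\begin{align}
    G \;=\; (S_{\vec{l}}^\dagger)^{\otimes 3}\, H\, (S_{\vec{l}})^{\otimes 3}.
\end{align}

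The crux of the argument is to deduce $\vec{l}=\vec{0}$ from this identity and the phase convention. By Eq.~(\ref{equ:conjugation}) the Pauli conjugation multiplies each coefficient $h^{(i)}_{\vec{k}}$ by the cube-root-of-unity phase $e^{i\phi_{\vec{k}\vec{l}}}$, so on party~1 one has $g^{(1)}_{\vec{k}} = e^{i\phi_{\vec{k}\vec{l}}} h^{(1)}_{\vec{k}}$. A direct computation using $ZX = \omega XZ$ gives $\phi_{(1,0),\vec{l}}$ and $\phi_{(0,1),\vec{l}}$ as non-degenerate linear functions of $l_2$ and $l_1$ respectively, so $\vec{l} \mapsto (\phi_{(1,0),\vec{l}}, \phi_{(0,1),\vec{l}})$ is a bijection from $\mathbb{Z}_3^2$ onto the nine pairs of third roots of unity. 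Because the standard-form convention places both $\arg g^{(1)}_{(1,0)}, \arg g^{(1)}_{(0,1)}$ and $\arg h^{(1)}_{(1,0)}, \arg h^{(1)}_{(0,1)}$ in $[0,2\pi/3)$, the two required phase shifts must both equal $1$, forcing $\vec{l}=\vec{0}$. Hence $G=H$ and, by positivity of the square root, $g=h$, so the two standard forms coincide.

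The main subtlety will be the degenerate case in which $g^{(1)}_{(1,0)}$ or $g^{(1)}_{(0,1)}$ vanishes, so the corresponding pinning constraint no longer cuts down a component of $\vec{l}$. The standard form was defined precisely to accommodate this by proceeding through a prescribed order of entries and parties and fixing phases on the first non-vanishing ones encountered. The identity $G=(S_{\vec{l}}^\dagger)^{\otimes 3} H (S_{\vec{l}})^{\otimes 3}$ descends to the analogous coefficient equations on those subsequent entries, and a recursive application of the same bijectivity argument on the remaining free components of $\vec{l}$ again forces $\vec{l}=\vec{0}$ and hence $g=h$.
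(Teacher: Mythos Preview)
Your proof is correct and is precisely the argument the paper has in mind; the paper itself gives only the one-line justification ``As can be easily seen, this leads to a unique standard form,'' and you have supplied the natural details (reduction via Lemma~\ref{lemma:symmetries} to $G=(S_{\vec l}^\dagger)^{\otimes 3}H(S_{\vec l})^{\otimes 3}$, then elimination of $\vec l$ via the phase convention). Two small cosmetic points: $h^{-1}Ug$ is a priori only \emph{proportional} to some $S_{\vec l}^{\otimes 3}$, but the modulus of the scalar is fixed to $1$ by the trace normalisation $\operatorname{tr}G_i=\operatorname{tr}H_i=1$, so your displayed identity still holds; and in the degenerate case your recursion need not literally force $\vec l=\vec 0$ (e.g.\ when all $H_i$ lie in $\operatorname{span}\{\identity,S_{\vec w},S_{-\vec w}\}$ for a common $\vec w$), but it always forces $G=H$, which is what you actually need.
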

Counting parameters, one obtains that 4 real parameters are required for the seed state and 4 complex parameters for each $G_i$ in order to identify a generic state (up to LUs). This amounts in total to 28 real parameters which suffice to parametrize LU--equivalence classes of tripartite qutrit states \footnote{It can be easily verified by a simple counting argument that indeed 28 parameters are needed to parametrize the LU--equivalence classes of tripartite qutrit states, which coincides with this result as we are considering the generic SLOCC classes.}.

Let us now derive simple necessary and sufficient conditions for the existence of SEP transformations among two states. Using the symmetries of the seed state in Theorem \ref{theo:gour} we obtain
			\begin{align}		
				\label{equ:gour}
				\sum_{\vec{k} \in \mathbb{Z}_3^2} p_{\vec{k}} \left(S_{\vec{k}}^\dagger\right)^{\otimes 3} H \left( S_{\vec{k}} \right)^{\otimes 3} = r G.
			\end{align}
			with  $H = H_1 \otimes H_2 \otimes H_3$ and $G =  G_1 \otimes G_2 \otimes G_3$ and some probability distribution $\{p_{\vec{k}}\}$
as necessary and sufficient conditions for the existence of SEP transformations mapping $g\ket{\psi}$ into $h\ket{\psi}$. By taking the trace over the left and the right hand side of  Eq. (\ref{equ:gour}) we get $r=1$. We also take partial traces of the left and the right hand side of  Eq.  (\ref{equ:gour}) over all parties but $i$ to get
	\begin{align}		
		\label{equ:gi}
		\sum_{\vec{k} \in \mathbb{Z}_3^2} p_{\vec{k}} S_{\vec{k}}^\dagger H_i S_{\vec{k}} = G_i.
	\end{align}
Inserting Eq. (\theequation) in the right hand side of Eq.  (\ref{equ:gour}) we obtain that a state $h \ket{\psi}$ is reachable via SEP iff there exists a probability distribution $\{p_{\vec{k}}\}$ such that 
	\begin{align}		
		&\sum_{\vec{k}} p_{\vec{k}} \left(S_{\vec{k}}^\dagger\right)^{\otimes 3}H \left(S_{\vec{k}} \right)^{\otimes 3}
		= \sum_{\vec{k}_1,\vec{k}_2,\vec{k}_3} p_{\vec{k}_1} p_{\vec{k}_2} p_{\vec{k}_3} \nonumber\\ &\quad \times  \left( S_{\vec{k}_1}^\dagger \otimes S_{\vec{k}_2}^\dagger \otimes S_{\vec{k}_3}^\dagger \right) H \left( S_{\vec{k}_1} \otimes  S_{\vec{k}_2} \otimes  S_{\vec{k}_3} \right) .
	\end{align}
Note that in these equations we substituted $G$, such that the equations only depend on $H$, i.e., they only depend on the final state after applying the SEP transformation.

To simplify notation we define $h_{\vec{0}}^{(i)} = \frac{1}{3}$ for all $i$. Note that the vector $\vec{h}^{(i)}$ is still eight-dimensinal as defined in Eq. (\ref{equ:gvec}) for $\vec{g}^{(i)}$.
Let us now write $H_i$ in Eq. (\theequation) in the basis of the generalized Pauli matrices as in Eq. (\ref{equ:pauli}) to obtain
	\begin{align}		
		\label{equ:sepcon1}
		&\sum_{\vec{l} \vec{m} \vec{n}} \sum_{\vec{k}} p_{\vec{k}} \;   h_{\vec{l}}^{(1)} h_{\vec{m}}^{(2)} h_{\vec{n}}^{(3)} \left(S_{\vec{k}}^\dagger\right)^{\otimes 3} \left( S_{\vec{l}} \otimes S_{\vec{m}} \otimes S_{\vec{n}}  \right) S_{\vec{k}}^{\otimes 3} \nonumber\\
		& \ = \sum_{\vec{l} \vec{m} \vec{n}} \sum_{\vec{k}_1,\vec{k}_2,\vec{k}_3} p_{\vec{k}_1} p_{\vec{k}_2} p_{\vec{k}_3}   \;  h_{\vec{l}}^{(1)} h_{\vec{m}}^{(2)} h_{\vec{n}}^{(3)} \left( S_{\vec{k}_1} \otimes S_{\vec{k}_2} \otimes S_{\vec{k}_3} \right)^\dagger \nonumber\\
		& \qquad  \times  \left( S_{\vec{l}} \otimes S_{\vec{m}} \otimes S_{\vec{n}} \right) \left(  S_{\vec{k}_1} \otimes  S_{\vec{k}_2} \otimes  S_{\vec{k}_3} \right).
	\end{align}

Using Eq. (\ref{equ:conjugation}) in Eq. (\theequation) and the fact that the set $\{S_{\vec{k}}\}_{\vec{k}}$ forms an orthogonal basis and therefore the condition given in Eq. (\ref{equ:sepcon1}) has to hold for each of the coordinates with respect to that basis, we obtain the following necessary and sufficient condition
	\begin{align}		
		\label{equ:sepconfinal1}
		  \eta_{\vec{l}} \eta_{\vec{m}} \eta_{\vec{n}} &= \eta_{\vec{l}+\vec{m}+\vec{n}} \ \forall {\vec{l} \vec{m} \vec{n} \ \text{s.t.} \ h_{\vec{l}}^{(1)} h_{\vec{m}}^{(2)} h_{\vec{n}}^{(3)} \neq 0},
	\end{align}
where $\eta_{\vec{l}} \equiv \sum_{\vec{k}} p_{\vec{k}} e^{i \phi_{\vec{l} \vec{k}}}$. We used here that $e^{i (\phi_{\vec{lk}}+ \phi_{\vec{mk}})}=e^{i \phi_{(\vec{l+m})\vec{k}}}$, which follows from Eq. (\ref{equ:pauligroup}) and Eq. (\ref{equ:conjugation}) \footnote{More precisely, we have due to Eq. (\ref{equ:pauligroup}) that for all $\vec{l}$, $\vec{m}$ and $\vec{k}$ $S_{\vec{l}} S_{\vec{m}} \propto S_{\vec{l}+\vec{m}}$ and hence Eq. (\ref{equ:conjugation}) implies on the one hand, $S_{\vec{k}}^\dagger S_{\vec{l}} S_{\vec{m}} S_{\vec{k}} = S_{\vec{l}} S_{\vec{m}} e^{i  \phi_{(\vec{l} + \vec{m}) \vec{k}}}$ as $S_{\vec{k}}^\dagger S_{\vec{l} + \vec{m}}  S_{\vec{k}} = S_{\vec{l} + \vec{m}}  e^{i  \phi_{(\vec{l} + \vec{m}) \vec{k}}}$, but on the other hand, $S_{\vec{k}}^\dagger S_{\vec{l}} S_{\vec{m}} S_{\vec{k}} = S_{\vec{k}}^\dagger S_{\vec{l}} S_{\vec{k}} S_{\vec{k}}^\dagger S_{\vec{m}} S_{\vec{k}}=  S_{\vec{l}} S_{\vec{m}} e^{i \left[ \phi_{\vec{l} \vec{k}} + \phi_{\vec{m} \vec{k}} \right]}$. Thus, we have $S_{\vec{l}} S_{\vec{m}} e^{i  \phi_{(\vec{l} + \vec{m}) \vec{k}}  } = S_{\vec{l}} S_{\vec{m}} e^{i \left[ \phi_{\vec{l} \vec{k}} + \phi_{\vec{m} \vec{k}} \right]}$.}
Note that
\begin{align}
	\label{equ:eta0}
	\eta_{\vec{0}} = \sum_{\vec{i}} p_{\vec{i}} = 1 \quad \text{and} \quad \eta_{\vec{k}}^* = \eta_{-\vec{k}},
\end{align}
where $*$ denotes the complex conjugate. Hence, there are only four independent $\eta_{\vec{k}}$. Here and in the following, we will call $\eta_{\vec{k}}$ and $\eta_{\vec{l}}$ independent if $\vec{k}, \vec{l} \neq \vec{0}$ and $\vec{k} \neq \pm \vec{l}$.
Note further that as $\eta_{\vec{k}} \in \conv{1, e^{i\frac{2 \pi}{3}}, e^{i\frac{4 \pi}{3}}} \ \forall {\vec{k}}$, where by $\operatorname{conv}$ we denote the convex hull, we have  \begin{align}
	\label{equ:etaabs}
	  |\eta_{\vec{k}}| \leq 1 \ \forall {\vec{k}}.
\end{align}

Eq. (\ref{equ:sepconfinal1}) becomes particularly simple if we choose one of the indices $\vec{l}$, $\vec{m}$, or $\vec{n}$ equal to $\vec{0}$. We obtain
	\begin{align}		
		\label{equ:sepconfinal2}
		\eta_{\vec{l}} \eta_{\vec{m}} &= \eta_{\vec{l}+\vec{m}} \ \forall {\vec{l} \vec{m} \ \text{s.t.}\ h_{\vec{l}}^{(i)} h_{\vec{m}}^{(j)} \neq 0},
	\end{align}
for all pairs of parties $(i, j)$ with $i \neq j$.
If in addition one of the remaining indices $\vec{l}$ and $\vec{m}$ equals $\vec{0}$, the conditions are fulfilled in a trivial way. Hence, we only need to consider the case where no index in Eq. (\theequation) equals $\vec{0}$.

The necessary condition given in Eq.  (\theequation) can be rewritten as
	\begin{align}		
		\label{equ:4qubitconditions}
		\vec{h}^{(i)} \left( \vec{h}^{(j)} \right)^T \odot (N_1 - N_2) = 0,
	\end{align}
where $\odot$ denotes the Hadamard product, $N_1 \equiv \vec{\eta}\vec{\eta}^T$ and $\left[N_2\right]_{\vec{l} \vec{m}} \equiv \eta_{\vec{l}+\vec{m}}$. Moreover, using this notation, the necessary condition given in Eq. (\ref{equ:gi}) is equivalent to 
\begin{align}
\label{equ:validpovm}
	 \eta_{\vec{u}} h_{\vec{u}}^{(i)} = g_{\vec{u}}^{(i)} \ \forall \vec{u}.
\end{align}

Note that Eq. (\ref{equ:4qubitconditions}) and Eqs. (\ref{equ:validpovm}) basically coincide with the necessary conditions for the existence of a SEP transformation for the generic four qubit states (see \cite{article:mes}), but here $\eta_{\vec{k}}$ are complex whereas in the four qubit case $\eta_{\vec{k}}$ are real numbers. However, we will see that in the three qutrit scenario there are additional solutions to this equation leading to the fact that some states are reachable via SEP but not LOCC  as we will see in section \ref{sec:sepneqlocc}.

\section{Reachable states via SEP}
We will now use the conditions given in Eq. (\ref{equ:sepconfinal1}) and (\ref{equ:sepconfinal2}) and the standard form to characterize states which are reachable via SEP, i.e., states which are final states of a (non-trivial) SEP transformation, and thus final states for which Eq. (\ref{equ:sepconfinal1}) can be fulfilled such that the initial state is not LU-equivalent to the final state.

The necessary and sufficient condition for SEP-convertibility given in Eq. (\ref{equ:sepconfinal1}) and the necessary condition for SEP-convertibility given in Eq. (\ref{equ:sepconfinal2}) are written in such a way that they only depend on the final state $h \ket{\psi}$. Note that all these conditions account for equations in the $\eta_{\vec{k}}$ which have to hold depending on the form of $h$. If we want to know whether a state $h \ket{\psi}$ can be reached via SEP we have to know whether there exits a set of probabilities $\{p_{\vec{k}}\}_{\vec{k}}$ fulfilling the conditions, i.e., solving all the equations in the $\eta_{\vec{k}}$ which have to hold for a given $h$. We will call such a set $\{p_{\vec{k}}\}_{\vec{k}}$ a solution. Each solution $\{p_{\vec{k}}\}_{\vec{k}}$ which fulfills all the conditions accounts for a possible SEP-transformation transforming some state into $h \ket{\psi}$. Moreover, for each solution  $\{p_{\vec{k}}\}_{\vec{k}}$ the initial state $g \ket{\psi}$ can be easily determined via Eq. (\ref{equ:gi}) and therefore depends on $h$ and the probabilities $\{p_{\vec{k}}\}_{\vec{k}}$. Let us now state some simple observations which we will then use to derive the reachable states.

\begin{observation}
\label{obs:etakisone}
If one of the $\eta_{\vec{k}}$ with $\vec{k} \neq \vec{0}$ fulfills that $|\eta_{\vec{k}}| = 1$ , then three probabilities $p_{\vec{i}}$ sum up to one and the other six are zero. Which of the probabilities have to be zero depends on  $\vec{k}$.
\end{observation}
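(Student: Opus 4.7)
The plan is to exploit the geometric fact that $\eta_{\vec{k}}$ is a convex combination of the three cube roots of unity, and that only the vertices of the corresponding inscribed equilateral triangle have modulus one. Concretely, from Eq.~(\ref{equ:conjugation}) and the definition $\eta_{\vec{k}} = \sum_{\vec{i}} p_{\vec{i}} e^{i \phi_{\vec{k} \vec{i}}}$, the phases $e^{i \phi_{\vec{k} \vec{i}}}$ lie in $\{1, \omega, \omega^2\}$ with $\omega = e^{i 2\pi/3}$, so $\eta_{\vec{k}} \in \conv{1, \omega, \omega^2}$.

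First, I would show that, for fixed $\vec{k}$, the map $\vec{i} \mapsto \phi_{\vec{k} \vec{i}}$ is a group homomorphism from $(\mathbb{Z}_3^2,+)$ to $(\mathbb{Z}_3,+)$. This follows directly by combining Eq.~(\ref{equ:pauligroup}) with Eq.~(\ref{equ:conjugation}), exactly as in the footnote after Eq.~(\ref{equ:sepconfinal1}), since $S_{\vec{k}}^\dagger S_{\vec{i}+\vec{j}} S_{\vec{k}}$ can be computed in two ways. For $\vec{k} \neq \vec{0}$ this homomorphism is nontrivial (one can verify by picking any $\vec{i}$ linearly independent of $\vec{k}$ in $\mathbb{Z}_3^2$ that $S_{\vec{k}}$ and $S_{\vec{i}}$ do not commute), hence surjective, with a kernel of size three. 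Therefore $\mathbb{Z}_3^2$ partitions into three cosets $C_0(\vec{k}), C_1(\vec{k}), C_2(\vec{k})$ of three elements each, on which $e^{i \phi_{\vec{k} \vec{i}}}$ is identically $1$, $\omega$, or $\omega^2$, respectively.

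Next, setting $\alpha_j \equiv \sum_{\vec{i} \in C_j(\vec{k})} p_{\vec{i}}$, one has $\alpha_j \geq 0$, $\alpha_0+\alpha_1+\alpha_2=1$, and $\eta_{\vec{k}} = \alpha_0 + \alpha_1 \omega + \alpha_2 \omega^2$. The hypothesis $|\eta_{\vec{k}}|=1$ then says that this convex combination of the three vertices of an equilateral triangle inscribed in the unit circle itself lies on the circle. By strict convexity of the closed unit disk, the only points of $\conv{1,\omega,\omega^2}$ of modulus one are the vertices themselves, so exactly one $\alpha_j$ equals $1$ and the other two vanish.

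Finally, from $\alpha_j = 0$ and $p_{\vec{i}} \geq 0$ we conclude $p_{\vec{i}} = 0$ for every $\vec{i}$ in the two vanishing cosets, i.e.\ six of the nine probabilities are zero while the three probabilities indexed by the surviving coset sum to one. The particular coset that survives is determined by which of $\{1,\omega,\omega^2\}$ the number $\eta_{\vec{k}}$ equals, and the three cosets themselves depend on $\vec{k}$, which matches the claim. The only genuinely non-mechanical step is recognizing the coset structure induced by the homomorphism $\vec{i} \mapsto \phi_{\vec{k} \vec{i}}$; once this is in place, the rest is an immediate application of strict convexity.
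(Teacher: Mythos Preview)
Your argument is correct and follows essentially the same route as the paper's proof: both use that $\eta_{\vec{k}}$ is a convex combination of $1,\omega,\omega^2$ whose weights are sums of three of the $p_{\vec{i}}$, and then invoke that only the vertices of the triangle have modulus one. The paper simply asserts that ``the weights in the convex combination are certain sums of three probabilities,'' whereas you supply the reason via the homomorphism/coset structure of $\vec{i}\mapsto\phi_{\vec{k}\vec{i}}$; this is a welcome clarification but not a different approach.
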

\begin{proof}
Recall that $\eta_{\vec{k}} \in \conv{1, e^{i\frac{2 \pi}{3}}, e^{i\frac{4 \pi}{3}}} \ \forall {\vec{k}}$. For $\vec{k}\in \mathbb{Z}_3^2\backslash\{\vec{0}\}$ the weights in the convex combination are certain sums of three probabilities. Therefore $|\eta_{\vec{k}}| = 1$ implies that one weight in the convex combination has to be one. Hence the corresponding three probabilities have to sum up to one. This proves the observation.
\end{proof}

In the next observation we consider the case where more than one of the independent $\eta_{\vec{k}}$ has absolute value equal to one.
\begin{observation}
\label{obs:moreetakisone}
If more than one of the independent $\eta_{\vec{k}}$ fulfills that $|\eta_{\vec{k}}| = 1$, then one $p_{\vec{i}} = 1$ (and the others vanish).
\end{observation}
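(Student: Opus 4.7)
The plan is to apply Observation \ref{obs:etakisone} separately to each of the two independent indices $\vec{l}$ and $\vec{l}'$ for which $|\eta_{\vec{l}}| = |\eta_{\vec{l}'}| = 1$, and then combine the two resulting constraints on $\{p_{\vec{k}}\}_{\vec{k}}$ using the symplectic structure of $\mathbb{Z}_3^2$. First I would recall the Weyl commutation relation $S_{\vec{k}} S_{\vec{l}} = \omega^{k_2 l_1 - k_1 l_2}\, S_{\vec{l}} S_{\vec{k}}$ with $\omega = e^{i 2\pi/3}$, which is the content of Eq.~(\ref{equ:conjugation}) rewritten for the generalized Pauli matrices. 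For every fixed $\vec{l} \neq \vec{0}$ this says that $\vec{k} \mapsto e^{i \phi_{\vec{l}\vec{k}}}$ is a nontrivial group homomorphism from $\mathbb{Z}_3^2$ onto the three cube roots of unity. Its kernel $V_{\vec{l}}$ is therefore a one-dimensional subgroup of order $3$, and the three preimages of $1$, $e^{i2\pi/3}$, $e^{i4\pi/3}$ partition $\mathbb{Z}_3^2$ into the three cosets of $V_{\vec{l}}$.

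Observation \ref{obs:etakisone} then tells me that $|\eta_{\vec{l}}|=1$ forces $\{p_{\vec{k}}\}_{\vec{k}}$ to be supported inside a single coset of $V_{\vec{l}}$, because $\eta_{\vec{l}}$ is a convex combination of cube roots of unity and can only reach the unit circle if all the weight sits on one of them. Applying this both to $\vec{l}$ and to $\vec{l}'$, the support of $\{p_{\vec{k}}\}_{\vec{k}}$ must lie in the intersection of one coset of $V_{\vec{l}}$ with one coset of $V_{\vec{l}'}$. The next ingredient is to translate the paper's notion of \emph{independent} ($\vec{l},\vec{l}'\neq\vec{0}$ and $\vec{l}\neq\pm\vec{l}'$) into genuine linear independence over $\mathbb{Z}_3$; this works because $\mathbb{Z}_3^{\times}=\{1,-1\}$, so ``not a nonzero scalar multiple'' is the same as ``linearly independent''.

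Once this is established, $V_{\vec{l}}$ and $V_{\vec{l}'}$ are distinct one-dimensional subspaces of $\mathbb{Z}_3^2$, hence $V_{\vec{l}} \cap V_{\vec{l}'} = \{\vec{0}\}$ and $V_{\vec{l}} + V_{\vec{l}'} = \mathbb{Z}_3^2$. A short argument (the addition map $V_{\vec{l}} \times V_{\vec{l}'} \to \mathbb{Z}_3^2$, $(v_1,v_2)\mapsto v_1+v_2$, is an isomorphism by cardinality) then shows that any coset of $V_{\vec{l}}$ meets any coset of $V_{\vec{l}'}$ in exactly one element $\vec{i}$. Combined with $\sum_{\vec{k}} p_{\vec{k}} = 1$, this gives $p_{\vec{i}} = 1$ and $p_{\vec{j}} = 0$ for $\vec{j} \neq \vec{i}$, as claimed. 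The only step that requires any care is verifying that the two kernels are distinct whenever $\vec{l}$ and $\vec{l}'$ are independent in the sense of the paper; once this is in hand, the rest is elementary coset geometry over $\mathbb{Z}_3$.
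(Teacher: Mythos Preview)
Your proposal is correct and follows the same overall strategy as the paper: apply Observation~\ref{obs:etakisone} to each of the two indices and intersect the resulting constraints. The paper phrases this concretely, saying that two triples of probabilities must each sum to one while the complementary six vanish, and then asserts (by inspection) that for any admissible pair of triples at most one probability appears in both, forcing one $p_{\vec{i}}=1$. You instead identify the triples as cosets of the kernels $V_{\vec{l}}$, $V_{\vec{l}'}$ of the characters $\vec{k}\mapsto e^{i\phi_{\vec{l}\vec{k}}}$, and replace the inspection step by the structural fact that two distinct one-dimensional subspaces of $\mathbb{Z}_3^2$ are complementary, so any two of their cosets meet in a single point. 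This is the same argument dressed in more conceptual language; your version has the advantage of \emph{explaining} why the overlap is a single element rather than leaving it to a case check, while the paper's version is shorter and avoids introducing the symplectic form and kernel subgroups explicitly.
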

\begin{proof} To prove this observation, note that due to Observation \ref{obs:etakisone} two triples of probabilities have to sum up to one, and the other probabilities have to vanish as two independent $\eta_{\vec{k}}$ have absolute value equal to one. One readily sees that for any allowed choice of the triples of probabilities which sum up to one there is at most one probability which appears in both of these triplets. Therefore one probability $p_{\vec{k}}$ has to be one, and all the others have to vanish.
\end{proof}
One can easily see from Eq. (\ref{equ:gour}) that if more than one of the independent $\eta_{\vec{k}}$ have absolute value equal to one, the standard forms of $g \ket{\psi}$ and $h \ket{\psi}$ coincide and hence, the intial state is LU-equivalent to the final state.  Hence, in order to study non-trivial transformations we only need to consider those solutions where at most one $\eta_{\vec{k}}$ (and its complex conjugate $\eta_{\vec{k}}^* = \eta_{-\vec{k}}$) have absolute value equal to 1.

In the next observation we consider the consequences of Observation \ref{obs:etakisone} and Observation \ref{obs:moreetakisone} on the vectors $\vec{h}^{(i)}$.
\begin{observation}
\label{obs:atmostone}
 If there exists $\vec{k} \in \mathbb{Z}_3^2\backslash\{\vec{0}\}$ and $i, j \in \{1,2,3\}$, $i\neq j$ such that $ h_{\vec{k}}^{(i)} h_{\vec{k}}^{(j)} \neq 0$, then the only solutions to Eq. (\ref{equ:sepconfinal2}) are those, where $ \left|  \eta_{\vec{k}} \right| = 1$.
\end{observation}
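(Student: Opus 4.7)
The plan is to use the necessary condition in Eq. (\ref{equ:sepconfinal2}) with a carefully chosen pair of indices, exploiting the hermiticity structure encoded in Eq. (\ref{equ:nu}).

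First I would note that the hypothesis $h_{\vec{k}}^{(i)} h_{\vec{k}}^{(j)} \neq 0$, combined with Eq. (\ref{equ:nu}) applied to both $H_i$ and $H_j$ (which are hermitian), immediately gives $h_{-\vec{k}}^{(i)} \neq 0$ and $h_{-\vec{k}}^{(j)} \neq 0$. In particular, the product $h_{\vec{k}}^{(i)} h_{-\vec{k}}^{(j)}$ is nonzero, so the pair $(\vec{l}, \vec{m}) = (\vec{k}, -\vec{k})$ is an allowed choice in the quantifier of Eq. (\ref{equ:sepconfinal2}).

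Next, I would invoke Eq. (\ref{equ:sepconfinal2}) for this choice to obtain
\begin{align}
\eta_{\vec{k}} \, \eta_{-\vec{k}} = \eta_{\vec{0}}.
\end{align}
By Eq. (\ref{equ:eta0}) we have $\eta_{\vec{0}} = 1$ and $\eta_{-\vec{k}} = \eta_{\vec{k}}^*$, so the equation collapses to $|\eta_{\vec{k}}|^2 = 1$, i.e.\ $|\eta_{\vec{k}}| = 1$. Any probability distribution $\{p_{\vec{k}}\}$ yielding a solution of Eq. (\ref{equ:sepconfinal2}) under the stated hypothesis must therefore satisfy this, establishing the claim.

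There is essentially no obstacle: the whole argument rests on recognising that the hermiticity of the local operators $H_i$, $H_j$ forces the existence of the ``conjugate'' nonvanishing coordinate $h_{-\vec{k}}^{(\cdot)}$, which then makes the pair $(\vec{k}, -\vec{k})$ admissible in Eq. (\ref{equ:sepconfinal2}). Once this is noticed, the combination with $\eta_{\vec{0}} = 1$ and $\eta_{-\vec{k}} = \eta_{\vec{k}}^*$ produces the modulus condition in a single line, consistently with the bound $|\eta_{\vec{k}}| \leq 1$ from Eq. (\ref{equ:etaabs}).
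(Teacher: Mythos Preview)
Your proof is correct and follows essentially the same route as the paper's: use hermiticity (Eq.~(\ref{equ:nu})) to obtain $h_{-\vec{k}}^{(j)}\neq 0$, then apply Eq.~(\ref{equ:sepconfinal2}) with $(\vec{l},\vec{m})=(\vec{k},-\vec{k})$ and invoke Eq.~(\ref{equ:eta0}) to get $|\eta_{\vec{k}}|^2=1$. The paper additionally records the companion relation $\eta_{\vec{k}}^2=\eta_{-\vec{k}}$, but, as you implicitly notice, only the second relation is actually needed for the conclusion.
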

\begin{proof}
Whenever there are two parties $(i,j)$ for which the same components in the vectors $\vec{h}^{(i)}$ and $\vec{h}^{(j)}$ are non vanishing, i.e., $h_{\vec{k}}^{(i)} \neq 0$ and $h_{\vec{k}}^{(j)} \neq 0$ (which also implies that $h_{-\vec{k}}^{(i)} \neq 0$ and $h_{-\vec{k}}^{(j)} \neq 0$) for the same $\vec{k}$, then  due to  Eq. (\ref{equ:sepconfinal2}) both $\eta_{\vec{k}} \eta_{\vec{k}} = \eta_{-\vec{k}}$ and $\eta_{\vec{k}} \eta_{-\vec{k}} = \eta_{\vec{0}}$ must hold. Due to Eq. (\ref{equ:eta0}) the latter condition is equivalent to  $|\eta_{\vec{k}}|^2 = 1$.
\end{proof}
Note that Observation \ref{obs:atmostone} and Observation \ref{obs:moreetakisone} imply that for any non-trivial solution of Eq. (\ref{equ:gour}) the vectors $\vec{h}^{(i)}$ have to be of a form that
 $h_{\vec{k}}^{(i)} h_{\vec{k}}^{(j)} \neq 0$  holds for at most one $\vec{k}$ (and the corresponding $-\vec{k}$), i.e., there exists no $i_1 \neq j_1$, $ i_2 \neq j_2$, $\vec{0} \neq \vec{k}_1 \neq  \pm \vec{k}_2 \neq \vec{0}$ such that
\begin{align}
	h_{\vec{k}_1}^{(i_1)} h_{\vec{k}_1}^{(j_1)} \neq 0 \ \text{and} \  h_{\vec{k}_2}^{(i_2)} h_{\vec{k}_2}^{(j_2)} \neq 0.
\end{align}

We will now characterize the states which are reachable via SEP. In particular we will make use of Eq. (\ref{equ:sepconfinal1}), the necessary and sufficient, and Eq. (\ref{equ:sepconfinal2}), the necessary condition for SEP-reachability.
Recall that in order to transform a state $g\ket{\psi}$ into the state $h \ket{\psi}$ via SEP, the parties have to implement the POVM  $\left\{ M_{\vec{k}} \right\}_{\vec{k} \in \mathbb{Z}_3^2}$, where $M_{\vec{k}} = \frac{1}{\sqrt{r}} \sqrt{p_{\vec{k}}} h S_{\vec{k}}^{\otimes 3} g^{-1}$ and that we use the notation $h_{\vec{w}}^i \in \Span{\identity, S_{\vec{w}}, S_{-\vec{w}}}$.
Using the observations above we are now able to prove the following theorem.
\begin{theorem}
\label{theo:sepreach}
A generic state $h \ket{\psi}$ is reachable via SEP from some other LU-inequivalent state iff (up to permutations) either
\begin{enumerate}[(i)]
	\item $h = h_1 \otimes h_2 \otimes \identity$ such that $\vec{h}^{(1)} \odot \vec{h}^{(2)} = 0$, where not both   $h_1 \propto \identity$ and   $h_2 \propto \identity$ or
	\item $h = h_1 \otimes h_{\vec{w}}^2 \otimes h_{\vec{w}}^3$, for some $\vec{w} \in \mathbb{Z}_3^2 \backslash \{\vec{0}\}$ with $h_1 \neq h^1_{\vec{w}}$.
\end{enumerate}
\end{theorem}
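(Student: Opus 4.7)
The plan is to prove both directions separately, making systematic use of the structural observations on the supports of the vectors $\vec{h}^{(i)}$.

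For the necessity direction, assume $h\ket{\psi}$ is reachable from a LU-inequivalent state via SEP, with corresponding distribution $\{p_{\vec{k}}\}$. By Observation \ref{obs:moreetakisone} and the discussion following Observation \ref{obs:atmostone}, $h_{\vec{k}}^{(i)} h_{\vec{k}}^{(j)} \neq 0$ (with $i\neq j$) can hold for at most one common $\pm\vec{k}$ pair. I would case-split on whether any pair of parties has overlapping support. In the pairwise-disjoint case, I rule out all three parties being non-trivial: choose $\vec{u}\in\supp{\vec{h}^{(1)}}$, $\vec{v}\in\supp{\vec{h}^{(2)}}$, $\vec{w}\in\supp{\vec{h}^{(3)}}$; since $\mathbb{Z}_3^2\setminus\{\vec{0}\}$ decomposes into four $\pm$-pairs, WLOG $\vec{w}=\vec{u}+\vec{v}$. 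Applying Eq. (\ref{equ:sepconfinal1}) to the triple $(\vec{u},\vec{v},-\vec{w})$ yields $\eta_{\vec{u}}\eta_{\vec{v}}\eta_{\vec{w}}^* = \eta_{\vec{0}} = 1$; taking moduli together with Eq. (\ref{equ:etaabs}) forces $|\eta_{\vec{u}}|=|\eta_{\vec{v}}|=|\eta_{\vec{w}}|=1$, contradicting Observation \ref{obs:moreetakisone}. Hence at most two parties are non-trivial, giving case (i) after relabelling. In the overlap case, WLOG parties 2 and 3 share $\pm\vec{w}$; Observation \ref{obs:etakisone} gives $|\eta_{\vec{w}}|=1$, and Observation \ref{observation:span} lets us take $h_2,h_3\in\Span{\identity, S_{\vec{w}}, S_{-\vec{w}}}$. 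Any additional support of $\vec{h}^{(2)}$ or $\vec{h}^{(3)}$ outside $\{\pm\vec{w}\}$ would, by an analogous triple argument in Eq. (\ref{equ:sepconfinal1}) combined with the unique-pair constraint on party 1's overlap, force a further independent $|\eta|=1$ and hence triviality, so parties 2 and 3 are confined to the $\vec{w}$-span. The hypothesis $h_1 \neq h_{\vec{w}}^1$ ensures non-triviality, yielding case (ii).

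For sufficiency, I would construct explicit distributions in each case and verify non-LU-equivalence via Eq. (\ref{equ:validpovm}). In case (i), since $\vec{h}^{(3)}=\vec{0}$ (beyond $h_{\vec{0}}^{(3)}=1/3$) and $\vec{h}^{(1)}\odot\vec{h}^{(2)}=\vec{0}$, the non-trivial sepconfinal conditions reduce to $\eta_{\vec{l}}\eta_{\vec{m}}=\eta_{\vec{l}+\vec{m}}$ for $\vec{l}\in\supp{\vec{h}^{(1)}}$, $\vec{m}\in\supp{\vec{h}^{(2)}}$. I would choose $\eta$ as a character of the subgroup of $\mathbb{Z}_3^2$ generated by those supports, mixed with the uniform distribution to strictly shrink the moduli below 1; Fourier inversion yields non-negative probabilities for appropriate mixing weight. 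By Eq. (\ref{equ:validpovm}), $g^{(i)}_{\vec{k}}=\eta_{\vec{k}} h^{(i)}_{\vec{k}}$, so $G_i\neq H_i$ on any component where $\eta\neq 1$; since not both $h_1, h_2$ are $\propto\identity$, this gives $g\ket{\psi}\not\sim_{LU}h\ket{\psi}$. In case (ii), take $p_{\vec{k}}$ supported on $\{\vec{0}, \vec{w}, -\vec{w}\}$. Bilinearity of $\phi_{\vec{l}\vec{k}}$ yields $\eta_{\vec{l}+\vec{w}}=\eta_{\vec{l}}$ identically, so all sepconfinal conditions reduce to tautologies because $\vec{h}^{(2)},\vec{h}^{(3)}$ are supported in $\{\pm\vec{w}\}$. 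The hypothesis $h_1\neq h_{\vec{w}}^1$ produces some $\vec{l}\in\supp{\vec{h}^{(1)}}\setminus\{\pm\vec{w}\}$ where $|\eta_{\vec{l}}|<1$, so that $G_1\neq H_1$ and the transformation is non-trivial.

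The hardest step is the overlap-case necessity argument, specifically showing that $\vec{h}^{(2)}$ and $\vec{h}^{(3)}$ cannot extend beyond $\{\pm\vec{w}\}$. This requires a delicate analysis of how triples in Eq. (\ref{equ:sepconfinal1}) involving the extra support propagate modulus-1 constraints across all three parties' supports, so that any such extension forces additional independent $|\eta_{\vec{l}}|=1$ and triviality by Observation \ref{obs:moreetakisone}. By contrast, the pairwise-disjoint necessity and both sufficiency constructions are comparatively routine once one writes down the explicit triple producing the constraint $\eta_{\vec{u}}\eta_{\vec{v}}\eta_{\vec{w}}^*=1$ or performs the Fourier inversion to obtain valid probabilities.
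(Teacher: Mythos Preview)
Your overall plan tracks the paper's, and the pairwise-disjoint necessity argument as well as the case~(ii) sufficiency construction are essentially correct and match the paper. Two steps, however, have genuine gaps.

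\textbf{Overlap-case necessity.} Your claim that the two \emph{overlapping} parties must both be confined to $\Span{\identity,S_{\vec{w}},S_{-\vec{w}}}$ is false. Take $\vec{h}^{(1)}=\vec{0}$, $\supp{\vec{h}^{(2)}}=\{\pm\vec{w},\pm\vec{v}\}$ with $\vec{v}\neq\pm\vec{w}$, and $\supp{\vec{h}^{(3)}}=\{\pm\vec{w}\}$. The only overlapping pair is $(2,3)$, so your ``WLOG'' forces this labelling, yet party~2 is \emph{not} confined to $\pm\vec{w}$. No triple argument produces a second independent $|\eta|=1$ here: party~1 contributes only $\vec{l}=\vec{0}$, and the remaining two-party conditions $\eta_{\vec{v}}\eta_{\pm\vec{w}}=\eta_{\vec{v}\pm\vec{w}}$ merely propagate $|\eta_{\vec{v}}|$ without forcing it to~$1$. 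Nor \emph{should} there be a contradiction---this state is non-trivially reachable, lying in case~(ii) after the permutation $1\leftrightarrow 2$. Your argument therefore fails to place it there. The paper avoids this by a different cut: it shows directly (via a case analysis labelled (a)--(c)) that if \emph{two} parties each have at least two independent non-vanishing components, one always derives a second independent $|\eta|=1$ and hence triviality. Thus at most one party has rich support; that party becomes ``party~1'', and the remaining two are then each in some $\vec{w}_i$-span, after which one argues $\vec{w}_2=\vec{w}_3$. The key point your outline misses is that the ``arbitrary'' party in case~(ii) need not be the one outside the overlapping pair.

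\textbf{Case~(i) sufficiency.} The ``character mixed with uniform'' construction does not satisfy Eq.~(\ref{equ:sepconfinal2}). Mixing a point mass with the uniform distribution gives $\eta_{\vec{k}}=t\,\chi(\vec{k})$ for $\vec{k}\neq\vec{0}$ with some character $\chi$ and $t\in(0,1)$; then $\eta_{\vec{l}}\eta_{\vec{m}}=t^2\chi(\vec{l}+\vec{m})$ whereas $\eta_{\vec{l}+\vec{m}}=t\,\chi(\vec{l}+\vec{m})$ (disjoint $\pm$-closed supports guarantee $\vec{l}+\vec{m}\neq\vec{0}$), forcing $t\in\{0,1\}$. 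Only $t=0$ yields a non-trivial transformation, and that is precisely the paper's construction: take $p_{\vec{k}}=\tfrac{1}{9}$ for all $\vec{k}$, so that $\eta_{\vec{k}}=0$ for $\vec{k}\neq\vec{0}$, the initial state is the seed $\ket{\psi}$, and every condition in Eq.~(\ref{equ:sepconfinal1}) is trivially met.
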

\begin{proof}
\emph{Only if:} We first show that states which can be reached non--trivially are necessarily of the form given in the theorem.
To this end, we show that the conditions given in Eq. (\ref{equ:sepconfinal1}) can be fulfilled for non LU--equivalent initial and final states, $g \ket{\psi}$ and $h \ket{\psi}$, respectively, only if the final states are of the form given above.

Due to Observation \ref{obs:moreetakisone} we know that for non--trivial transformations not more than one of the independent $\eta_{\vec{k}}$ can have absolute value equal to 1. We distinguish the case where no (exactly one) $\eta_{\vec{k}}$ has absolute value equal to 1 respectively. Note that here and in the following $\vec{k}\neq \vec{0}$. We will see that these two cases correspond to the cases (i) and (ii) of the theorem, respectively.

Let us first consider the case where no $\eta_{\vec{k}}$ has absolute value equal to 1. First we consider the case that for no party we have $\vec{h}^{(i)} = \vec{0}$, i.e., $\forall i \ \exists {\vec{k}} \quad  h_{\vec{k}}^{(i)} \neq 0$. Then just by the fact that there cannot exist two parties such that components $h_{\vec{k}}^{(i)} \neq 0$ for the same $\vec{k}$ (as otherwise our assumption $|\eta_{\vec{k}}| \neq 1$ would not hold due to Observation \ref{obs:atmostone}) we can conclude that wlog for party 2 (3) there can be at most one index $\vec{k}_2$ such that $\ h_{\pm \vec{k}_2}^{(2)} \neq 0$ ($\vec{k}_3$ such that $\ h_{\pm \vec{k}_3}^{(3)} \neq 0$) respectively, where $\vec{k}_2 \neq \pm \vec{k}_3$ and for party 1 only the remaining two independent components can be non vanishing, i.e., up to joint permutations of the  entries in the vectors we have
$\vec{h}^{(1)} = \left(  h_{\left( \begin{smallmatrix}  1 \\  0  \end{smallmatrix}\right)}^{(1)}, h_{\left( \begin{smallmatrix}  2 \\  0  \end{smallmatrix}\right)}^{(1)}, h_{\left( \begin{smallmatrix}  0 \\  1  \end{smallmatrix}\right)}^{(1)}, h_{\left( \begin{smallmatrix}  0 \\  2  \end{smallmatrix}\right)}^{(1)}, 0,0,0,0  \right)^T$,
$\vec{h}^{(2)} = \left(  0,0,0,0, h_{\left( \begin{smallmatrix}  1 \\  1  \end{smallmatrix}\right)}^{(2)}, h_{\left( \begin{smallmatrix}  2 \\ 2  \end{smallmatrix}\right)}^{(2)}, 0,0  \right)^T$ and
$\vec{h}^{(3)} = \left(  0,0,0,0,0,0, h_{\left( \begin{smallmatrix}  2 \\ 1  \end{smallmatrix}\right)}^{(3)} , h_{\left( \begin{smallmatrix}  1 \\  2  \end{smallmatrix}\right)}^{(3)}   \right)^T$.
It is straightforward to show that due to Eq. (\ref{equ:sepconfinal1}) it must hold that
\begin{align}
	\label{equ:etatilde}
	\eta_{\tilde{\vec{k}}_1} \eta_{\tilde{\vec{k}}_2} \eta_{\tilde{\vec{k}}_3} &= \eta_{\vec{0}} = 1,
\end{align}
where $\tilde{\vec{k}}_i$ is chosen out of the set $\left\{ \pm \vec{k}_i \right\}$ such that $\tilde{\vec{k}}_1+\tilde{\vec{k}}_2+\tilde{\vec{k}}_3=\vec{0}$ holds. Note that such a triple of vectors always exists, even if there is only one independent non vanishing component in the vector $\vec{h}^{(1)}$.
Using that $\left| \eta_{\vec{k}} \right|\leq 1$ for all $\vec{k}$ in Eq. (\theequation) we obtain that it must hold that $|\eta_{\tilde{\vec{k}}_1}| = |\eta_{\tilde{\vec{k}}_2}| = |\eta_{\tilde{\vec{k}}_3}| = 1$ which leads to a contradiction to the assumption that no $\eta_{\vec{k}}$ has absolute value equal to 1. Therefore, wlog it must hold that $\vec{h}^{(3)} = \vec{0}$. As can be easily seen Eq. (\ref{equ:sepconfinal1}) (or equivalently Eq. (\ref{equ:sepconfinal2})) can then only be fulfilled for $\vec{h}^{(1)}, \vec{h}^{(2)}$ such that $\vec{h}^{(1)} \odot \vec{h}^{(2)} = 0$ [as in Theorem \ref{theo:sepreach} (i)] as none of the $\eta_{\vec{k}}$ has absolute value equal to 1. Note that if $\vec{h}^{(1)}= \vec{h}^{(2)}=0$ the initial state $g \ket{\psi}$ would be LU-equivalent to the final state $h\ket{\psi}$, which is the seed state. In all the other instances the states given in Theorem \ref{theo:sepreach} (i) can be reached from LU--inequivalent states choosing $ p_{\vec{k}} = \frac{1}{9} \  \forall {\vec{k}}$, which leads to $  \eta_{\vec{k}} = 0 \ \forall {\vec{k} \neq \vec{0}}$ \footnote{Note that there might exist additional solutions $\left\{p_{\vec{k}} \right\}_{\vec{k}}$. However, for some choices of $h$ the solution  $\left\{p_{\vec{k}} \right\}_{\vec{k}}$ with $ p_{\vec{k}} = \frac{1}{9} \ \forall {\vec{k}}$ is indeed the only solution, i.e., we can reach the state  $h \ket{\psi}$ only from the seed state (see Section \ref{sec:sepneqlocc}).}.

Let us now consider the case where exactly one of the independent $\eta_{\vec{k}}$ has absolute value equal to 1.
We will first consider the case that there exist two parties  (wlog party 1 and party 2) for which at least two components in the corresponding vectors $\vec{h}^{(i)}$ are non vanishing. We denote the indices of these components by $\vec{k}_1$, $\vec{l}_1$ and $\vec{k}_2$, $\vec{l}_2$, respectively. As we will show in the following, this always leads to the existence of at least two indices $\vec{k}'$ and $\vec{l}'$ with $\vec{k}' \neq \pm \vec{l}'$ s.t. $\left| \eta_{\vec{k}'} \right| = \left| \eta_{\vec{l}'} \right| = 1$ contradicting the assumption. As $h_{\vec{k}_1}^{(1)} h_{-\vec{k}_1}^{(1)} h_{\vec{l}_1}^{(1)} h_{-\vec{l}_1}^{(1)} \neq 0$ and $h_{\vec{k}_2}^{(2)} h_{-\vec{k}_2}^{(2)} h_{\vec{l}_2}^{(2)} h_{-\vec{l}_2}^{(2)} \neq 0$, where $ \vec{k}_1 \neq \pm \vec{l}_1$ and  $\vec{k}_2 \neq \pm \vec{l}_2$ Eq. (\ref{equ:sepconfinal2}) leads to quadratic equations in $\eta_{\vec{i}}$ which have to be fulfilled.
We now distinguish the following cases:
\begin{enumerate}[(a)]
	\item Either both of the indices $\pm \vec{k}_1$ and $\pm \vec{l}_1$ coincide with the indices $\pm \vec{k}_2$ and $\pm \vec{l}_2$,
	\item or exactly one of the indices $\pm \vec{k}_1$ and $\pm \vec{l}_1$ coincides with one of the indices $\pm \vec{k}_2$ and $\pm \vec{l}_2$,
	\item or none of the indices $\pm \vec{k}_1$ and $\pm \vec{l}_1$ coincide with $\pm \vec{k}_2$ or $\pm \vec{l}_2$.
\end{enumerate}
In case (a), the components $h_{\vec{k}_1}^{(1)}$, $h_{\vec{l}_1}^{(1)}$ and $h_{\vec{k}_1}^{(2)}$, $h_{\vec{l}_1}^{(2)}$ are non--vanishing, which implies (see Observation \ref{obs:atmostone}) that $\left| \eta_{\vec{k}_1} \right| = \left| \eta_{\vec{l}_1} \right| = 1$, which contradicts the assumption.

In case (b), we can assume wlog that $\vec{k}_1 \in\{ \pm \vec{k}_2\}$, however, $\vec{l}_1 \not\in \{\pm \vec{l}_2\}$. Due to  Observation \ref{obs:atmostone} it must then hold that  $\left| \eta_{\vec{k}_1} \right| = 1$. From Eq. (\ref{equ:sepconfinal2}) we get the following set of quadratic equations in $\eta_{\vec{k}}$
\begin{align}
	\eta_{\pm \vec{k}_1} \eta_{\pm \vec{l}_1} &= \eta_{\pm \vec{k}_1 \pm \vec{l}_1}\nonumber \\
	\eta_{\pm \vec{k}_1} \eta_{\pm \vec{l}_2} &= \eta_{\pm \vec{k}_1 \pm \vec{l}_2} \nonumber\\
	\eta_{\pm \vec{l}_1} \eta_{\pm \vec{l}_2} &= \eta_{\pm \vec{l}_1 \pm \vec{l}_2}.
\end{align}
In the following, we will show that $\eta_{\vec{k}_1}$ occurs on the right hand side of at least one of these equations, implying that there exists some $\vec{l}\neq \vec{k}_1$ such that $\left| \eta_{\vec{l}} \right| = 1$, contradicting the assumption. As $\vec{l}_1, \vec{k}_1 \neq \vec{0}$ we have $\left( \vec{k}_1 + \vec{l}_1 \right) \neq \pm \left( \vec{k}_1 - \vec{l}_1 \right)$. The same is true for  $\vec{k}_1 + \vec{l}_2$ and $\vec{k}_1 - \vec{l}_2$. Moreover $\left\{\pm \vec{k}_1 \pm \vec{l}_1 \right\} = \left\{\pm \vec{k}_1 \pm \vec{l}_2 \right\}$ is not possible as otherwise $\vec{l}_1 = \pm \vec{l}_2$ leading to a contradiction. Therefore, three of the four independent $\eta_{\vec{i}}$ have to appear on the right hand side of the first two equations in Eq. (\theequation). Now we have that either   $\left\{ \pm \vec{k}_1 \pm \vec{l}_1, \pm \vec{k}_1 \pm \vec{l}_2  \right\} = \mathbb{Z}_3^2$, which implies that $\eta_{\vec{k}_1}$ occurs on the right hand side of at least one of the Eqs. (\theequation), or one of the indices $\pm \vec{k}_1 \pm \vec{l}_1$ coincides with one of the indices $\pm \vec{k}_1 \pm \vec{l}_2$. In the latter case out of the many possible combinations of indices coinciding, there are only two fundamentally different cases, either  $\vec{k}_1 + \vec{l}_1 = \vec{k}_1 + \vec{l}_2$ or $\vec{k}_1 + \vec{l}_1 = - \vec{k}_1 + \vec{l}_2$ \footnote{This is due to the fact that wlog we can redefine $\vec{k_1} \rightarrow -\vec{k_1}$, $\vec{l_1} \rightarrow -\vec{l_1}$, and/or $\vec{l_2} \rightarrow -\vec{l_2}$ such that all the other cases are transformed into one of these two.}. The first case is not possible as  $ \vec{l}_1 \neq \pm \vec{l}_2$ due to our assumption. In the second case we obtain $\vec{l}_1 - \vec{l}_2 = \vec{k}_1$, implying that we get $\vec{k}_1$ as an index on the right hand side of the third equation in Eq. (\theequation). Hence, case (b) cannot occur.

In case (c), we have $\left\{ \pm \vec{k}_1, \pm \vec{l}_1\right\} \cap \left\{ \pm \vec{k}_2 ,\pm \vec{l}_2 \right\} = \emptyset$. Again, due to Eq. (\ref{equ:sepconfinal2}) we get a set of quadratic equations in the $\eta_{\vec{l}}$. As before, we find that all independent $\eta_{\vec{l}}$ appear in the linear term of these equations. Using similar methods as before, as well as Eq. (\ref{equ:etaabs}), and the fact that in the here considered case  $|\eta_{\vec{k}}|=1$ one obtains that all $\eta_{\vec{l}}$ must have absolute value equal to 1, leading again to a contradiction.

As none of the cases (a)--(c) can occur, we have that only for one party (wlog party 1) there can be more than one independent non vanishing component $h_{\pm\vec{k}}^{(i)} \neq 0$.  For the remaining, however, it must hold that $h_2 = h_{\vec{w}_2}^2$ and $h_3 = h_{\vec{w}_3}^3$ for some $\vec{w}_2, \vec{w}_3 \in \mathbb{Z}_3^2\setminus \{\vec{0}\}$ (see Observation \ref{observation:span}). Moreover, as already mentioned if $\vec{h}^{(i)}=0$ $\forall i$ the corresponding state is not reachable via SEP. Thus, there has to be at least one party $i$ for which $\vec{h}^{(i)}$ has a non-vanishing component and we choose wlog $i=1$.  We show now that either $\vec{w}_2 = \vec{w}_3 = \vec{w} $ has to hold (up to particle permutations) or that one can always choose $\vec{w}_2 = \vec{w}_3 = \vec{w}$. Note that if $\vec{h}^{(2)}=0$ we can trivially choose $\vec{w}_2 = \vec{w}_3$  and analogous for $\vec{h}^{(3)}=0$. Let us now consider the case that neither $\vec{h}^{(2)}=0$ nor $\vec{h}^{(3)}=0$.  Due to Eq. (\ref{equ:sepconfinal1}) there cannot exist $\vec{k}_1, \vec{k}_2,\vec{k}_3$ such that $\vec{k}_1 \neq \pm \vec{k}_2 \neq \pm \vec{k}_3 \neq \pm \vec{k}_1$ and $h_{\vec{k}_1}^{(1)} h_{\vec{k}_2}^{(2)} h_{\vec{k}_3}^{(3)} \neq 0$, as otherwise more than one independent $\eta_{\vec{k}}$ would have absolute value equal to 1. This can be seen using the same arguments as before, which imply that the different vectors can always be added to obtain $\eta_{\vec{0}}=1$ on the right hand side of Eq. (\ref{equ:etatilde}). Hence, for any triple of vectors, $\vec{k}_1 , \vec{k}_2 ,\vec{k}_3 $ for which $h_{\vec{k}_1}^{(1)} h_{\vec{k}_2}^{(2)} h_{\vec{k}_3}^{(3)} \neq 0$ holds, it must be that $\vec{k}_i\in\{\pm\vec{k}_j\}$ for some $\{i,j\}\in\{\{1,2\},\{1,3\},\{2,3\}\}$. This implies that in case party $1$ has more than one independent non vanishing component  $h_2 = h_{\vec{w}}^2$  and $h_3 = h_{\vec{w}}^3$ has to hold for some $\vec{w} \in \mathbb{Z}_3^2\setminus \{\vec{0}\}$. In case party $1$ has exactly one independent non-vanishing component we choose wlog that $\vec{h}^{(2)}$ is parallel to $\vec{h}^{(3)}$. Note that in case $h_1 = h_{\vec{w}}^1$ and not both $ h_{\vec{w}}^2$ and $ h_{\vec{w}}^3$ are proportional to the identity using Observation \ref{obs:atmostone} one can easily see that the standard forms of $H$ and $G$ would coincide implying that the initial state is LU-equivalent to the final state.
The remaining states are of the form corresponding to case (ii) of the theorem \footnote{Note that some states also belong to case (i) of the theorem.}.
Hence, we have that a state $h \ket{\psi}$ is reachable via SEP from some other LU-inequivalent state only if it can be written in one of the forms given in the theorem.

\emph{If:} We now show, that states of the form given in the theorem can be reached from some LU-inequivalent state via SEP.
The SEP maps which can be used to reach the states can be easily constructed as follows. In general the SEP maps are given in Eq. (\ref{equ:sepmap}) (recall that $r=1$).
Here, $G_i$ can be obtained by depolarizing $H_i$ with the generalized Pauli matrices $S_{\vec{k}}$ weighted with the probabilities $p_{\vec{k}}$ which are the solution of Eq.  (\ref{equ:gour}).
States of the form given in case (i) of Theorem \ref{theo:sepreach} can be reached from the (non LU--equivalent) seed state $\ket{\psi}$. The corresponding SEP map is given by $\left\{ M_{\vec{k}} \right\}_{\vec{k}\in \mathbb{Z}_3^2}$, where
\begin{align}
	\label{equ:caseisep}
	M_{\vec{k}} = \frac{1}{3} \left( h_1 \otimes h_2 \otimes \identity \right) S_{\vec{k}}^{\otimes 3}.
\end{align}

States of the form given in case (ii) of Theorem \ref{theo:sepreach} can be reached from the (non LU--equivalent) state $g_{\vec{w}}^1 \otimes h_{\vec{w}}^2 \otimes h_{\vec{w}}^3 \ket{\psi}$, where $G_1 = \left(g_{\vec{w}}^1 \right)^\dagger g_{\vec{w}}^1 = \frac{1}{3} \sum_{\vec{k} \in \{\vec{0}, \vec{w}, -\vec{w}\}} S_{\vec{k}}^\dagger H_1 S_{\vec{k}}$. The corresponding SEP map is given by $\left\{ M_{\vec{k}} \right\}_{\vec{k} \in \{\vec{0}, \vec{w}, -\vec{w}\}}$, with
\begin{align}
	\label{equ:caseiisep}
	M_{\vec{k}} = \frac{1}{\sqrt{3}} \left(h_1 \otimes \identity \otimes \identity \right) S_{\vec{k}}^{\otimes 3}  \left(\left(g_{\vec{w}}^1\right)^{-1} \otimes\identity \otimes \identity \right).
\end{align}

Here, we used that $S_{\pm \vec{w}}$ commutes with $h^i_{\vec{w}}$. Hence, all states given in Theorem \ref{theo:sepreach} can be reached non--trivially, which completes the proof.
\end{proof}

\section{Example of a pure state transformation that is possible via SEP but not via LOCC}
\label{sec:sepneqlocc}

In the previous section we have characterized all states reachable via SEP. As SEP reachability is only a necessary condition for LOCC reachability not all of those states have to be reachable via LOCC. In this section we will characterize a family of states which is, surprisingly, not reachable via LOCC, although the family is of the form given in case (i) of Theorem \ref{theo:sepreach} and therefore reachable via SEP.

In order to characterize this family of states, we first prove the following lemma.
\begin{lemma}
\label{lemma:noloccreach}
  Let
	$h = h_1 \otimes h_2 \otimes \identity$ be such that (up to permutations)
	 $H_i \in \Span{\identity, S_{\vec{u}_i}, S_{-\vec{u}_i}, S_{\vec{w}_i}, S_{-\vec{w}_i}}$ for $i \in \{1,2\}$,
where $\{ \pm \vec{u}_1, \pm \vec{w}_1, \pm \vec{u}_2, \pm \vec{w}_2, \vec{0} \} = \mathbb{Z}_3^2$ and none of the coordinates of $H_i$ corresponding to $S_{\vec{u}_i}$ and $S_{\vec{w}_i}$  is zero.
Then, the only possibility to reach a generic state $h \ket{\psi}$ via SEP are SEP transformations where the initial state is the seed state $\ket{\psi}$.
\end{lemma}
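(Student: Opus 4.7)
My plan is to combine the necessary condition for SEP-convertibility from Eq.~(\ref{equ:sepconfinal2}) with Eq.~(\ref{equ:validpovm}) in order to show that every SEP solution $\{p_{\vec{k}}\}$ forces $\eta_{\vec{u}}=0$ for every $\vec{u}\ne\vec{0}$. Once this is established, Eq.~(\ref{equ:validpovm}) gives $g_{\vec{u}}^{(i)}=0$ for all $\vec{u}\ne\vec{0}$ and every $i$, so $G_i=\tfrac{1}{3}\identity$ for all three parties; each $g_i$ is then proportional to a unitary, and $g\ket{\psi}$ is LU-equivalent to the seed state $\ket{\psi}$.

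First I would identify which SEP relations are active. Since $h_3=\identity$, the three-index condition (\ref{equ:sepconfinal1}) is non-trivial only when the index on party $3$ is $\vec{0}$, so it reduces to the two-index condition (\ref{equ:sepconfinal2}) for the pair $(1,2)$: $\eta_{\vec{l}}\eta_{\vec{m}}=\eta_{\vec{l}+\vec{m}}$ for $\vec{l}\in\{\pm\vec{u}_1,\pm\vec{w}_1\}$ and $\vec{m}\in\{\pm\vec{u}_2,\pm\vec{w}_2\}$. The hypothesis $\{\pm\vec{u}_1,\pm\vec{w}_1,\pm\vec{u}_2,\pm\vec{w}_2,\vec{0}\}=\mathbb{Z}_3^2$ says the four $\pm$-pairs are exactly the four one-dimensional subspaces of $\mathbb{Z}_3^2$; hence $\vec{l}+\vec{m}\ne\vec{0}$ for every admissible pair (they lie on distinct lines), and every sum lies again in $\{\pm\vec{u}_1,\pm\vec{w}_1,\pm\vec{u}_2,\pm\vec{w}_2\}$. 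Introducing the abbreviations $A=\eta_{\vec{u}_1}$, $B=\eta_{\vec{w}_1}$, $C=\eta_{\vec{u}_2}$, $D=\eta_{\vec{w}_2}$ and using Eq.~(\ref{equ:eta0}), the $16$ relations collapse into eight multiplicative identities among $\{A,B,C,D\}$ and their complex conjugates.

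Next I would take absolute values and exploit $|\eta_{\vec{k}}|\le 1$ (Eq.~(\ref{equ:etaabs})). Multiplying suitable pairs of identities---for instance those arising from $\vec{u}_1\pm\vec{u}_2$, whose right-hand sides live on the two ``complementary'' lines---yields relations of the form $|A|^2|C||D|=|C||D|$ and three cyclic variants. These force either $|A|=|B|=|C|=|D|=1$ or at least one of $A,B,C,D$ to vanish. The first alternative would contradict Observation~\ref{obs:moreetakisone}: more than one independent $\eta_{\vec{k}}$ of modulus one implies a single $p_{\vec{k}_0}=1$, and then Eq.~(\ref{equ:validpovm}) forces $g$ to be proportional to $(S_{\vec{k}_0}^\dagger)^{\otimes 3} h (S_{\vec{k}_0})^{\otimes 3}$, so $g\ket{\psi}$ is LU-equivalent to $h\ket{\psi}$ and not to the seed, corresponding to a trivial transformation rather than a genuine SEP reachability. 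Hence one of $A,B,C,D$ must vanish; because each of these four variables appears on the right-hand side of some multiplicative identity as a product involving the others, the vanishing propagates and forces $A=B=C=D=0$, hence $\eta_{\vec{k}}=0$ for every $\vec{k}\ne\vec{0}$, which is exactly the situation where the initial state is the seed.

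The main obstacle is the bookkeeping of the eight identities: which right-hand side couples to which product depends on how sums of representatives of distinct lines distribute among the two remaining lines of $\mathbb{Z}_3^2$. I would handle this by fixing a canonical representative, e.g.\ $\vec{u}_1=(1,0),\,\vec{w}_1=(0,1),\,\vec{u}_2=(1,1),\,\vec{w}_2=(1,2)$, and noting that any other admissible configuration is related to this one by an element of $\mathrm{GL}(2,\mathbb{Z}_3)$ acting on $\mathbb{Z}_3^2$, which merely relabels the identities without altering the structural modulus-and-propagation argument.
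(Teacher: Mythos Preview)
Your proposal is correct and follows essentially the same route as the paper: both reduce to the multiplicative identities $\eta_{\vec l}\eta_{\vec m}=\eta_{\vec l+\vec m}$ for the pair $(1,2)$ and establish the dichotomy \emph{all $|\eta_{\vec k}|=1$} (trivial via Observation~\ref{obs:moreetakisone}) versus \emph{all $\eta_{\vec k}=0$} (seed initial state), the paper doing so by a single direct substitution ($\eta_{\vec u_1}\eta_{\vec u_2}=\eta_{\vec w_2}$ inserted into $\eta_{-\vec u_1}\eta_{\vec w_2}=\eta_{\vec u_2}$, giving $|\eta_{\vec u_1}|^2\eta_{\vec w_2}=\eta_{\vec w_2}$) rather than via your modulus products. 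One small bookkeeping fix: in your canonical labeling the two identities ``arising from $\vec u_1\pm\vec u_2$'' give $|A||C|=|D|$ and $|A||C|=|B|$, hence only $|B|=|D|$, not $|A|^2|C||D|=|C||D|$; to isolate $|A|^2$ you must combine one identity with $\vec m\in\{\pm\vec u_2\}$ and one with $\vec m\in\{\pm\vec w_2\}$ (e.g.\ $\vec u_1+\vec u_2$ giving $|A||C|=|D|$ together with $\vec u_1+\vec w_2$ giving $|A||D|=|C|$), after which your dichotomy and propagation go through exactly as described.
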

\begin{proof}
We have shown in Theorem \ref{theo:sepreach} that states of the form given in this lemma are reachable via SEP as Eq. (\ref{equ:gour}) can be solved by choosing $\eta_{\vec{k}} = 0 \ \forall {\vec{k} \neq 0}$ which corresponds to a probability distribution $p_{\vec{k}} = \frac{1}{9} \ \forall {\vec{k}}$. We will show, that this is the only solution to Eq. (\ref{equ:gour}). This leads to the fact that the state $h \ket{\psi}$ can only be the final state of a SEP protocol if the initial state is the seed state $\ket{\psi}$ as the map $\Lambda(\rho) = \sum_{\vec{k}} p_{\vec{k}} S_{\vec{k}}^\dagger \rho S_{\vec{k}}$ with $p_{\vec{k}} = \frac{1}{9} \ \forall {\vec{k}}$ is completely depolarizing.

Due to the form of $h$ specified in the lemma and due to the necessary conditions given in Eq. (\ref{equ:sepconfinal2}), the following quadratic equations in the $\eta_{\vec{k}}$ have to hold:
\begin{align}
	\label{equ:etalist}
	\eta_{\pm \vec{u}_1} \eta_{\pm \vec{u}_2} &= \eta_{\pm \vec{u}_1 \pm \vec{u}_2} \nonumber \\
	\eta_{\pm \vec{u}_1} \eta_{\pm \vec{w}_2} &= \eta_{\pm \vec{u}_1 \pm \vec{w}_2} \nonumber\\
	\eta_{\pm \vec{w}_1} \eta_{\pm \vec{u}_2} &= \eta_{\pm \vec{w}_1 \pm \vec{u}_2} \nonumber\\
	\eta_{\pm \vec{w}_1} \eta_{\pm \vec{w}_2} &= \eta_{\pm \vec{w}_1 \pm \vec{w}_2}.
\end{align}
As $\vec{u}_1 +\vec{u}_2\not\in\left\{ \pm \vec{u}_1, \pm \vec{u}_2\right\}$, we have that $\vec{u}_1 +\vec{u}_2\in\left\{ \pm \vec{w}_1, \pm \vec{w}_2\right\}$. Consider wlog that $\vec{u}_1 +\vec{u}_2=\vec{w}_2$. Thus, we have that $\eta_{\vec{u}_1} \eta_{ \vec{u}_2} = \eta_{ \vec{w}_2}$ and $\eta_{-\vec{u}_1} \eta_{ \vec{w}_2} = \eta_{ \vec{u}_2}$ [see Eqs. (\ref{equ:etalist})]. Inserting the second equation into the first one one obtains that either $|\eta_{\vec{u}_1}|=1$ or $\eta_{ \vec{w}_2}=0$ and therefore $\eta_{ \vec{u}_2}=0$.  Note that as $\left\{ \pm \vec{u}_1, \pm \vec{w}_1, \pm \vec{u}_2, \pm \vec{w}_2, \vec{0} \right\} = \mathbb{Z}_3^2$ we have that all the independent $\eta_{\vec{k}}$ appear at least once on the right hand side of Eqs. (\theequation). Thus, it can be easily seen that for $|\eta_{\vec{u}_1}|=1$ it has to hold that  $|\eta_{\vec{k}}|=1$ for at least two other independent $\vec{k}$.
However, $\left| \eta_{\vec{k}} \right| = 1$ for more than one $\vec{k}$ would imply that the initial state is LU equivalent to the final state due to Observation \ref{obs:moreetakisone}. Thus, we only have to consider the solution $\eta_{ \vec{w}_2}=\eta_{ \vec{u}_2}=0$. It can be easily shown using again the fact that  all the independent $\eta_{\vec{k}}$ appear at least once on the right hand side of Eqs. (\theequation) that  this implies $\eta_{\vec{k}} = 0 \ \forall {\vec{k}}$. This solution corresponds to a probability distribution  $\left\{ p_{\vec{k}}\right\}_{\vec{k} \in \mathbb{Z}_3^2}$ with $p_{\vec{k}} = \frac{1}{9} \ \forall {\vec{k}}$.
Therefore $\left\{ p_{\vec{k}}\right\}_{\vec{k} \in \mathbb{Z}_3^2}$ with $p_{\vec{k}} = \frac{1}{9} \ \forall {\vec{k}}$ is indeed the only way in which Eq. (\ref{equ:gour}) can be solved.
\end{proof}

Note that for states of the form given in Lemma \ref{lemma:noloccreach} it holds that there exists no $S_{\vec{k}}\in S(\ket{\Psi})$ with $\vec{k}\in\mathbb{Z}_3^2\backslash \{\vec{0}\}$ s.t $S_{\vec{k}}^\dagger H_iS_{\vec{k}}=H_i$ for $i\in\{1,2\}$.

Using the lemma above we can now prove the following theorem.
\begin{theorem}
\label{theo:noloccreach}
States of the form given in Lemma \ref{lemma:noloccreach} are not reachable via a LOCC protocol from some other LU-inequivalent state (although they can be reached via SEP).
\end{theorem}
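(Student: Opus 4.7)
The plan is to argue by contradiction. Suppose some LOCC protocol converts an LU-inequivalent state $g\ket{\psi}$ into $h\ket{\psi}$. Because LOCC $\subseteq$ SEP, Lemma~\ref{lemma:noloccreach} applies: the only admissible SEP preimage arises from the depolarizing choice $p_{\vec{k}}=1/9$ for all $\vec{k}$, and substituting this into Eq.~(\ref{equ:gi}) gives $G_i = \identity/3$ for every $i$. Hence any LOCC source is LU-equivalent to the seed state $\ket{\psi}$, and we may assume the protocol acts on $\ket{\psi}$ itself. On the other hand, $h\ket{\psi}$ is genuinely LU-inequivalent to $\ket{\psi}$: by the hypotheses of the lemma, $H_1$ and $H_2$ have nonvanishing coordinates along nontrivial generalized Paulis, so their standard forms do not match $\identity/3$.

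The central observation I would establish next is that every state occurring on every branch of the purported protocol must itself be LU-equivalent to $\ket{\psi}$. Indeed, if at some stage a branch is in state $g'\ket{\psi}$, then the remainder of the protocol is a deterministic LOCC transformation from $g'\ket{\psi}$ to $h\ket{\psi}$ and therefore also a SEP transformation; invoking Lemma~\ref{lemma:noloccreach} once more forces $(g'_i)^\dagger g'_i \propto \identity$ for every $i$, i.e.\ LU-equivalence of $g'\ket{\psi}$ to $\ket{\psi}$.

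Using this, I would show round by round that every measurement in the protocol is trivial. In the first round, party $j$ applies a POVM $\{A_\alpha\}_\alpha$; the pre-measurement reduction on party $j$ is $\identity/3$, so the post-measurement reduction on branch $\alpha$ equals $A_\alpha A_\alpha^\dagger/(3 p_\alpha)$. The LU-equivalence condition demands this equal $\identity/3$, whence $A_\alpha A_\alpha^\dagger \propto \identity$. Therefore each $A_\alpha$ is proportional to a unitary, $A_\alpha^\dagger A_\alpha \propto \identity$, and the POVM is non-informative: its only effect is to apply a random local unitary on party $j$. The same reasoning applies unchanged to every later round, because every branch keeps having maximally mixed single-party reductions. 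For finite-round protocols this closes the argument immediately, since every branch terminates LU-equivalent to $\ket{\psi}$ and not at the LU-inequivalent state $h\ket{\psi}$.

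The main obstacle I anticipate is the case of protocols requiring unboundedly many rounds. I would handle it by noting that every branch state at every finite round lies in the LU-orbit of $\ket{\psi}$, which is the image of the compact group $U(3)^{\otimes 3}$ under a continuous action and is hence a closed subset of the unit sphere in $\mathbb{C}^3\otimes\mathbb{C}^3\otimes\mathbb{C}^3$. Any norm-limit of such branch states must therefore also lie in this orbit, so the asymptotic state $h\ket{\psi}$ would still be LU-equivalent to $\ket{\psi}$, again contradicting the LU-inequivalence established above and completing the argument.
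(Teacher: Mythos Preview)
Your proof is correct and shares the paper's central step: Lemma~\ref{lemma:noloccreach} applied to the remainder of any branch forces every intermediate state to be LU-equivalent to the seed, and the contradiction follows. The paper packages this slightly differently---it first rules out single-round transformations (using that the symmetries are unitary) and then argues that a multi-round protocol must visit a state LU-inequivalent to both endpoints---whereas you conclude directly; your round-by-round argument that each POVM element is proportional to a unitary is correct but redundant once you know every branch state is LU-equivalent to $\ket{\psi}$, and your compactness argument for the infinite-round case makes explicit a point the paper leaves implicit.
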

\begin{proof}
As proven in Lemma \ref{lemma:noloccreach}, the state $h \ket{\psi}$ can only be the final state of a SEP protocol if the initial state was the seed state $\ket{\psi}$. Therefore also every deterministic LOCC protocol transforming a state into the final state $h \ket{\psi}$ has as initial state the state $\ket{\psi}$.
Using this result, we will first show that we cannot transform $\ket{\psi}$ into $h \ket{\psi}$ deterministically in a single round protocol, i.e., only one party does something nontrivial, communicates the measurement outcomes to the other parties and the other parties are allowed to apply LUs. Then we will show that we cannot reach $h \ket{\psi}$ with any (even an infinitely many round) LOCC protocol.

Let us assume now that we can reach $h \ket{\psi} =   h_1 \otimes h_2 \otimes \identity \ket{\psi} $ via LOCC in one round deterministically \footnote{As in this case the symmetries of the seed states are unitary, it follows trivially that the transformation can not be done in a single round. Note, however, that in general this does not need to hold \cite{article:spee15}.}.
Let us assume that party 1 does something nontrivial. Then there has to exist an operator $A$ and unitaries $U_2, U_3$ such that $A \otimes U_2 \otimes U_3 \ket{\psi} \propto h_1 \otimes h_2 \otimes \identity \ket{\psi}$. However, using the fact that the symmetries are unitary this implies that $H_2 \propto \identity$, which is a  contradiction to the assumption. Obviously, similar arguments can be used to find a contradiction if party 2 or party 3 were the parties applying nontrivial operations.  Hence, we cannot reach  $h \ket{\psi}$ in one round deterministically.

We now show that there cannot be a finitely many round or even infinitely many round LOCC protocol transforming $\ket{\psi}$ into the final state $h \ket{\psi}$.
Assume that we can transform the initial state $\ket{\psi}$ into $h\ket{\psi}$. As we cannot reach the final state in one single round, as shown above, it must hold that there exists a round $m$ and a branch $i$ such that there exists a state $\ket{\psi_{i,m}}$ different from the initial state $\ket{\psi}$ and the final state $h \ket{\psi}$, i.e.,
\begin{align}
	\exists\ket{\psi_{i,m}}: \ \ket{\psi_{i,m}} \not\simeq_{LU} \ket{\psi},\ \ket{\psi_{i,m}} \not\simeq_{LU} h \ket{\psi},
\end{align}
which can be transformed to the final state $h\ket{\psi}$, as after any round $m$ and for any branch $i$ the remaining (finitely or infinitely many round) part of the protocol has to transform the state $\ket{\psi_{i,m}}$ into $h \ket{\psi}$ deterministically. Using Lemma \ref{lemma:noloccreach}, this, however, implies that either $\ket{\psi_{i,m}} \simeq_{LU} \ket{\psi}$ or $\ket{\psi_{i,m}} \simeq_{LU} h \ket{ \psi}$ which is a contradiction to Eq. (\theequation).
Hence, the state $h \ket{\psi}$ cannot be reached via (finitely or infinitely many round) LOCC.
\end{proof}

As these transformations can be implemented via SEP but not via LOCC, this implies the existence of a familiy of entanglement monotones that can increase under SEP. In fact the maximum success probability for reaching a state of the form given in Lemma \ref{lemma:noloccreach} via LOCC, which has been shown to be an entanglement monotone \cite{article:monotones}, can increase under SEP.

\section{The MES of generic three qutrit states}
In this section, we will characterize the states which are reachable via LOCC. Note that these are all the states reachable via SEP, characterized in Theorem \ref{theo:sepreach}, excluding those for which we have proven that they are not reachable via LOCC in Theorem \ref{theo:noloccreach}.

\begin{theorem}
\label{theo:loccreach}
A generic state $h \ket{\psi}$ is reachable via LOCC from some other LU-inequivalent state iff (up to permutations)
	$h = h_1 \otimes h_{\vec{w}}^2 \otimes h_{\vec{w}}^3$ for some $\vec{w} \in \mathbb{Z}_3^2$, where $h_1 \neq h^1_{\vec{w}}$.
\end{theorem}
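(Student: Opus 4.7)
The plan is to derive the theorem by intersecting the SEP reachability characterization (Theorem \ref{theo:sepreach}) with the LOCC impossibility result (Theorem \ref{theo:noloccreach}), and then to exhibit an explicit one--round LOCC protocol realizing the surviving transformations.

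For the \emph{only if} direction, I use that $\mathrm{LOCC}\subseteq\mathrm{SEP}$, so any LOCC--reachable state must be of the form (i) or (ii) in Theorem \ref{theo:sepreach}. It remains to show that case (i) states which are not already of the form (ii) (after some permutation) are ruled out by Theorem \ref{theo:noloccreach}. Writing such a state as $h=h_1\otimes h_2\otimes\identity$ with $\vec{h}^{(1)}\odot\vec{h}^{(2)}=0$, I would argue by cases on the Pauli supports of $H_1$ and $H_2$. Whenever one of them, say $H_2$, already lies in some $\Span{\identity,S_{\vec{w}},S_{-\vec{w}}}$, then $h_2=h_{\vec{w}}^2$ and, since $\identity\in\Span{\identity,S_{\vec{w}},S_{-\vec{w}}}$, the state fits case (ii) after a suitable permutation; the Hadamard condition together with $H_2\not\propto\identity$ ensures $h_1\neq h_{\vec{w}}^1$. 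The remaining situation is that neither $H_1$ nor $H_2$ lies in any such span, so each has non--zero components on at least two independent Pauli pairs $\{\pm\vec{k}\}$; the Hadamard condition forces these supports to be disjoint, and since $\mathbb{Z}_3^2\setminus\{\vec{0}\}$ contains only four independent pairs, each $H_i$ covers exactly two and together they cover all four. This is precisely the hypothesis of Lemma \ref{lemma:noloccreach}, so by Theorem \ref{theo:noloccreach} such a state is not LOCC--reachable.

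For the \emph{if} direction, given $h=h_1\otimes h_{\vec{w}}^2\otimes h_{\vec{w}}^3$ with $h_1\neq h_{\vec{w}}^1$, I mimic the SEP map of Eq. (\ref{equ:caseiisep}). Set $G_1=\tfrac{1}{3}\sum_{\vec{k}\in\{\vec{0},\pm\vec{w}\}}S_{\vec{k}}^\dagger H_1 S_{\vec{k}}$ and use Observation \ref{observation:span} to pick $g_{\vec{w}}^1\in\Span{\identity,S_{\vec{w}},S_{-\vec{w}}}$ with $(g_{\vec{w}}^1)^\dagger g_{\vec{w}}^1=G_1$; the initial state is $g\ket{\psi}=g_{\vec{w}}^1\otimes h_{\vec{w}}^2\otimes h_{\vec{w}}^3\ket{\psi}$. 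Party 1 performs the three--outcome POVM with Kraus operators $A_{\vec{k}}=\tfrac{1}{\sqrt{3}}\,h_1 S_{\vec{k}}(g_{\vec{w}}^1)^{-1}$ for $\vec{k}\in\{\vec{0},\vec{w},-\vec{w}\}$; completeness $\sum_{\vec{k}}A_{\vec{k}}^\dagger A_{\vec{k}}=\identity$ is immediate from the definition of $G_1$. After party 1 announces the outcome $\vec{k}$, parties 2 and 3 each apply $S_{\vec{k}}$. Because $S_{\pm\vec{w}}$ commutes with every element of $\Span{\identity,S_{\vec{w}},S_{-\vec{w}}}$ and $S_{\vec{k}}^{\otimes 3}\ket{\psi}=\ket{\psi}$, a short manipulation shows that each branch yields $\tfrac{1}{\sqrt{3}}h\ket{\psi}$, so the transformation is deterministic. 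The LU--inequivalence of initial and final state is guaranteed by the assumption $h_1\neq h_{\vec{w}}^1$: $H_1$ then has nontrivial Pauli content outside $\Span{\identity,S_{\vec{w}},S_{-\vec{w}}}$, whereas $G_1$ does not.

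The most delicate step I expect is the case analysis in the only if direction, namely verifying that the form (i) states which escape form (ii) after every permutation are exactly those whose Pauli supports pair up as in Lemma \ref{lemma:noloccreach}. Once that dichotomy is established, the if direction reduces to the routine verification sketched above.
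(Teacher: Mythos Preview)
Your proposal is essentially correct and follows the same strategy as the paper: the \emph{only if} direction combines Theorem \ref{theo:sepreach} with Theorem \ref{theo:noloccreach}, and the \emph{if} direction exhibits the same one--round protocol in which party 1 measures $\{A_{\vec k}\}_{\vec k\in\{\vec 0,\pm\vec w\}}$ and parties 2 and 3 correct with $S_{\vec k}$. Your \emph{if} part is in fact slightly more streamlined than the paper's, which splits into two subcases (the second, $h_{\vec w}^2=h_{\vec w}^3\propto\identity$, is handled from the seed state with a nine--outcome POVM), whereas your three--outcome protocol covers both at once; the paper itself notes in a footnote that this is possible.

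One small imprecision in your \emph{only if} case analysis: when $H_2\in\Span{\identity,S_{\vec w},S_{-\vec w}}$ with $H_2\not\propto\identity$, the Hadamard condition forces $h^{(1)}_{\pm\vec w}=0$, which yields $h_1\neq h_{\vec w}^1$ \emph{unless} $H_1\propto\identity$. That residual case ($h=\identity\otimes h_{\vec w}^2\otimes\identity$) is not excluded by your stated argument; you need the permutation you alluded to (swap parties 1 and 2 and choose a different $\vec w'\neq\pm\vec w$). This is routine and does not affect the overall correctness.
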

\begin{proof}
\emph{Only if:}
Combining the results of Theorem \ref{theo:sepreach} and Theorem \ref{theo:noloccreach} we get that states reachable via LOCC necessarily have to be of the form given in this theorem.

\emph{If:}
We will prove this direction by explicitly giving the initial state $g \ket{\psi}$ and the LOCC protocol transforming $g \ket{\psi}$ into the final state $h \ket{\psi}$. We will distinguish in the following the two different cases: (i) not both $h_{\vec{w}}^2$ and $h_{\vec{w}}^3$ are proportional to the identity and (ii) $h_{\vec{w}}^2=h_{\vec{w}}^3\propto\one$.
For case (i) consider the following LOCC protocol.
The parties start out with the initial state $g_{\vec{w}}^1 \otimes h_{\vec{w}}^2 \otimes h_{\vec{w}}^3 \ket{\psi}$, where
$\left(g_{\vec{w}}^1 \right)^\dagger g_{\vec{w}}^1 = \frac{1}{3} \sum_{\vec{k} \in \{\vec{0}, \vec{w}, -\vec{w}\}} S_{\vec{k}}^\dagger H_1 S_{\vec{k}}$.
Party $1$ measures $\left\{ M_{\vec{k}}\right\}_{\vec{k} \in \{\vec{0}, \vec{w}, -\vec{w}\}}$, where $M_{\vec{k}} = \frac{1}{\sqrt{3}} h_1 S_{\vec{k}} \left(g_{\vec{w}}^1\right)^{-1}$. She communicates the outcome $\vec{k}$ to party $2$ and $3$ and they apply, depending on $\vec{k}$, the unitary $S_{\vec{k}}$.  It can be easily seen that $\left\{ M_{\vec{k}}\right\}_{\vec{k} \in \{\vec{0}, \vec{w}, -\vec{w}\}}$ is a valid POVM. Moreover, using the fact that $h_{\vec{w}}^i$ commutes with $S_{\pm\vec{w}}$ one easily sees that in every branch of the protocol the parties end up with the final state $h \ket{\psi}$
\footnote{Note that for a proper choice of $\vec{w}\in \mathbb{Z}_3^2 \setminus \{\vec{0}\}$ this LOCC protocol also allows to reach states of the form given by case (ii). Note further that the corresponding initial states are not necessarily in the MES.}.
For case (ii) the states can be reached from the corresponding seed state via the following LOCC protocol. Party $1$ implements a POVM whose elements are given by $M_{\vec{k}} = \frac{1}{3} h_1 S_{\vec{k}}$ where $\vec{k}\in \mathbb{Z}_3^2$. Then, in case of the measurement outcome $\vec{k}$ party $2$ and $3$ apply the unitary $S_{\vec{k}}$.  This proves the theorem.
\end{proof}

Any generic state which cannot be written as those given in Theorem \ref{theo:loccreach} is necessarily in the MES. However, all states which are reachable are reachable from a state in the MES. This can be easily seen as follows. Except for states that are (up to particle permutations) of the form  $ h_1 \otimes h_{\vec{w}}^2 \otimes \identity \ket{\Psi}$  with $ \vec{w} \in \mathbb{Z}_3^2 \backslash \{\vec{0}\}$, $\vec{h}^{(1)} \odot \vec{h}^{(2)} = 0$ and $\vec{h}^{(2)}\neq 0$  the corresponding LOCC protocols (and initial states) are given in the proof of Theorem \ref{theo:loccreach}. States of the form $ h_1 \otimes h_{\vec{w}}^2 \otimes \identity \ket{\Psi}$ can be reached from the seed state by implementing the following LOCC protocol. Party $2$ implements the POVM $\{M_{\vec{k}}\}_{\vec{k}\in \mathbb{Z}_3^2}$ where $M_{\vec{k}}= \frac{1}{3} h_{\vec{w}}^2 S_{\vec{k}}$ and depending on the measurement outcome the other parties apply $S_{\vec{k}}$ (see case (ii) in the proof of Theorem \ref{theo:loccreach}). The resulting state is given by  $ \one \otimes h_{\vec{w}}^2 \otimes \identity \ket{\Psi}$. Then, the protocol given for case (i) in the proof of Theorem \ref{theo:loccreach} is implemented in order to reach the state $ h_1 \otimes h_{\vec{w}}^2 \otimes \identity \ket{\Psi}$.   Thus, any state that is reachable can be obtained via LOCC from a state that can not be written as in Theorem \ref{theo:loccreach}. Therefore, the states which are not reachable via LOCC constitute the MES for generic three qutrit states as stated in the following corollary.
Note that in particular the seed state $\ket{\psi}$ is contained in $MES_{(3,3,3)}$.
\begin{corollary}
\label{corollary:mes}
The MES for generic three qutrit states is given by
 $MES_{(3,3,3)} = \left\{ h_1 \otimes h_2 \otimes h_3 \ket{\psi} \right\}$, where $h_1 \otimes h_2 \otimes h_3$ cannot be written as (up to permutations)  $h_1 \otimes h_{\vec{w}}^2 \otimes h_{\vec{w}}^3$ for some $\vec{w} \in \mathbb{Z}_3^2$, where $h_1 \neq h^1_{\vec{w}}$.
\end{corollary}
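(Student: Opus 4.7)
The plan is to combine Theorem \ref{theo:loccreach} with an explicit "reaching from MES" argument. The corollary has two things to verify: (i) every state in the proposed set is unreachable from any LU-inequivalent state via LOCC, and (ii) every truly tripartite $\vec{d}$-level entangled state is reachable via LOCC from some state in the proposed set. The minimality of the set follows from (i) together with (ii) and the fact that the MES must contain all unreachable states.

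For (i), I would simply invoke the contrapositive of Theorem \ref{theo:loccreach}: a state $h\ket{\psi}$ is reachable via LOCC from an LU-inequivalent state only if $h$ can be written (up to permutations) as $h_1\otimes h_{\vec{w}}^2\otimes h_{\vec{w}}^3$ with $h_1\neq h_{\vec{w}}^1$. So the complement of this form, which is what the corollary calls $MES_{(3,3,3)}$, consists entirely of unreachable states, giving property (i) of the MES definition. In particular, the seed state $\ket{\psi}$ itself is in $MES_{(3,3,3)}$ since, taking $h=\identity^{\otimes 3}$, one cannot decompose $\identity^{\otimes 3}$ in the excluded form with $h_1\neq h_{\vec{w}}^1$.

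For (ii), I would show that every reachable state $h\ket{\psi}$ can be obtained via LOCC starting from a state in $MES_{(3,3,3)}$. Most cases are immediate from the "if" direction of Theorem \ref{theo:loccreach}: the protocols constructed there start from initial states of the form $g_{\vec{w}}^1\otimes h_{\vec{w}}^2\otimes h_{\vec{w}}^3\ket{\psi}$ (case (i)) or from the seed state $\ket{\psi}$ (case (ii)), and one must only check that these initial states themselves lie in $MES_{(3,3,3)}$, which is evident since the seed state is in the MES and the $g_{\vec{w}}^1\otimes h_{\vec{w}}^2\otimes h_{\vec{w}}^3\ket{\psi}$ form (with a generic $g_{\vec{w}}^1$) does not fit the excluded template either. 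The one subtle subcase is when the target has the form $h_1\otimes h_{\vec{w}}^2\otimes\identity\ket{\psi}$ with $\vec{h}^{(1)}\odot\vec{h}^{(2)}=0$ and $\vec{h}^{(2)}\neq 0$; here the initial state supplied by Theorem \ref{theo:loccreach}(i)'s protocol itself might be of the reachable form. For this, I would chain two LOCC protocols: first apply the POVM on party $2$ described in case (ii) of Theorem \ref{theo:loccreach} to go from the seed state to $\identity\otimes h_{\vec{w}}^2\otimes\identity\ket{\psi}$, then apply the protocol of case (i) to reach $h_1\otimes h_{\vec{w}}^2\otimes\identity\ket{\psi}$. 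Since the seed state lies in $MES_{(3,3,3)}$, this establishes (ii).

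The only real obstacle is the bookkeeping around the "chained" subcase: one must verify that the intermediate state $\identity\otimes h_{\vec{w}}^2\otimes\identity\ket{\psi}$ is indeed a valid output of the case (ii) protocol and a valid input of the case (i) protocol, and that the composition remains a deterministic LOCC protocol. This is straightforward because both constituent protocols are one-round deterministic LOCC transformations with unitary corrections, so their concatenation is LOCC. Once this is in place, property (ii) follows and, together with property (i) and the minimality inherent in $MES_{(3,3,3)}$ being exactly the set of unreachable states, the corollary is proven.
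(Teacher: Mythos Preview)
Your proposal is correct and follows essentially the same route as the paper: you invoke Theorem \ref{theo:loccreach} for non-reachability, use the explicit LOCC protocols from its proof to show that reachable states come from states in the claimed MES, and identify precisely the subtle subcase $h_1\otimes h_{\vec{w}}^2\otimes\identity\ket{\psi}$ (with $\vec{h}^{(2)}\neq 0$) that requires chaining two one-round protocols starting from the seed state. This matches the paper's argument in the paragraph preceding the corollary, including the two-step construction through the intermediate state $\identity\otimes h_{\vec{w}}^2\otimes\identity\ket{\psi}$.
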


The states in Theorem \ref{theo:loccreach} are characterized by 16 real parameters implying that the set of reachable states is of measure zero (as it was the case in the four qubit scenario). Thus, the MES is of full measure. However, we will now show that most of the states in the MES are isolated, i.e., they can neither be reached from, nor converted into any other LU-inequivalent, truly tripartite, $\vec{d}$-level entangled state. In order to do so, we will characterize convertible states in the following theorem.

\begin{theorem}
\label{theo:loccconv}
A generic state $g \ket{\psi}$ is convertible via LOCC to some other LU-inequivalent state iff (up to permutations)
	$g = g_1 \otimes g_{\vec{w}}^2 \otimes g_{\vec{w}}^3$, for some $\vec{w} \in \mathbb{Z}_3^2$, where $g_1$ is arbitrary.
\end{theorem}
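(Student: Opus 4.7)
The plan is to prove the two implications separately. The "only if" direction will fall out of combining the characterization of LOCC--reachable states (Theorem \ref{theo:loccreach}) with the SEP single--party trace identity Eq.~(\ref{equ:gi}); the "if" direction will be an explicit single--round LOCC protocol, modeled on case (i) of the proof of Theorem \ref{theo:loccreach}. The main obstacle is the "only if" direction, where I need to avoid any induction on rounds of a general (possibly infinite round) LOCC protocol and instead read off the constraints on $g$ directly from the structure of the final state.

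For the "only if" direction, suppose $g\ket{\psi}$ is LOCC--convertible to some $h\ket{\psi}$ with $h\ket{\psi} \not\simeq_{LU} g\ket{\psi}$. Then $h\ket{\psi}$ is LOCC--reachable, so Theorem \ref{theo:loccreach} forces (up to a permutation of parties) $h = h_1 \otimes h_{\vec{w}}^2 \otimes h_{\vec{w}}^3$ for some $\vec{w}\in \mathbb{Z}_3^2\setminus\{\vec{0}\}$. Because LOCC is contained in SEP, Eq.~(\ref{equ:gi}) yields a probability distribution $\{p_{\vec{k}}\}$ with $G_i = \sum_{\vec{k}} p_{\vec{k}} S_{\vec{k}}^\dagger H_i S_{\vec{k}}$ for every $i$. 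For $i = 2, 3$ the input $H_i$ lies in $\Span{\identity, S_{\vec{w}}, S_{-\vec{w}}}$, and Eq.~(\ref{equ:conjugation}) shows conjugation by any $S_{\vec{k}}$ just multiplies each $S_{\pm \vec{w}}$ by a phase, so this three--dimensional subspace is invariant under the full Pauli symmetry group. Hence $G_i \in \Span{\identity, S_{\vec{w}}, S_{-\vec{w}}}$ as well, and Observation \ref{observation:span} gives $g_i = g_{\vec{w}}^i$ for $i = 2, 3$, with $g_1$ unconstrained, as claimed.

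For the "if" direction, assume after a permutation that $g = g_1 \otimes g_{\vec{w}}^2 \otimes g_{\vec{w}}^3$. Party 1 measures the three--outcome POVM $M_{\vec{k}} = \frac{1}{\sqrt{3}} h_1 S_{\vec{k}} g_1^{-1}$ for $\vec{k}\in\{\vec{0}, \vec{w}, -\vec{w}\}$, and parties 2 and 3 apply $S_{\vec{k}}$ upon being told the outcome. Since $S_{\vec{k}}$ with $\vec{k}\in\{\vec{0},\pm\vec{w}\}$ commutes with every element of $\Span{\identity, S_{\vec{w}}, S_{-\vec{w}}}$, using $S_{\vec{k}}^{\otimes 3}\ket{\psi} = \ket{\psi}$ a direct computation shows every branch ends in the same state $(h_1 \otimes g_{\vec{w}}^2 \otimes g_{\vec{w}}^3)\ket{\psi}$. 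The POVM normalisation $\sum_{\vec{k}} M_{\vec{k}}^\dagger M_{\vec{k}} = \identity$ reduces to the projection equation $\frac{1}{3}\sum_{\vec{k}\in\{\vec{0},\pm\vec{w}\}} S_{\vec{k}}^\dagger H_1 S_{\vec{k}} = G_1$, which is solvable by writing $H_1 = G_1 + K$ for any Hermitian $K$ in the orthogonal complement of $\Span{\identity, S_{\vec{w}}, S_{-\vec{w}}}$, taken sufficiently small to preserve $H_1 > 0$. A generic such $K$ moreover makes the resulting state LU--inequivalent to $g\ket{\psi}$: by the uniqueness of the standard form, LU--equivalence would require $H_1 = S_{\vec{l}}^\dagger G_1 S_{\vec{l}}$ for some $\vec{l}$ in the joint stabilizer of $H_2 = G_2$ and $H_3 = G_3$ under conjugation, a condition cutting out a proper subvariety of the complement which a generic $K$ avoids.
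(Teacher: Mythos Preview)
Your ``only if'' direction is correct and matches the paper's argument essentially verbatim: use Theorem~\ref{theo:loccreach} to constrain the form of any reachable $h$, then pull back via the SEP relation Eq.~(\ref{equ:gi}) (equivalently $\eta_{\vec{k}} h_{\vec{k}}^{(i)} = g_{\vec{k}}^{(i)}$) to see that $G_2,G_3\in\Span{\identity,S_{\vec w},S_{-\vec w}}$.

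The ``if'' direction, however, has a genuine gap. With your uniform choice $p_{\vec 0}=p_{\vec w}=p_{-\vec w}=\tfrac13$, the averaging map $H_1\mapsto \tfrac13\sum_{\vec k\in\{\vec 0,\pm\vec w\}}S_{\vec k}^\dagger H_1 S_{\vec k}$ is precisely the projection onto $\Span{\identity,S_{\vec w},S_{-\vec w}}$: for $\vec u\notin\{\vec 0,\pm\vec w\}$ the three phases $e^{i\phi_{\vec u\vec k}}$ run over $\{1,\omega,\omega^2\}$ and sum to zero. Hence your POVM normalisation equation $\tfrac13\sum_{\vec k}S_{\vec k}^\dagger H_1 S_{\vec k}=G_1$ forces $G_1\in\Span{\identity,S_{\vec w},S_{-\vec w}}$, i.e.\ $g_1=g_{\vec w}^1$. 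Writing $H_1=G_1+K$ with $K$ in the orthogonal complement does not help: the left-hand side still equals the projection of $G_1$, which differs from $G_1$ whenever $g_1\neq g_{\vec w}^1$. So your protocol covers only the special case $g_1=g_{\vec w}^1$, not the ``$g_1$ arbitrary'' claimed in the theorem.

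The paper repairs exactly this by allowing non-uniform weights: take $p_{\vec 0}=1-\tfrac{2\epsilon}{3}$, $p_{\pm\vec w}=\tfrac{\epsilon}{3}$, so that $\eta_{\vec u}=1$ for $\vec u\in\{\vec 0,\pm\vec w\}$ and $\eta_{\vec u}=1-\epsilon$ otherwise. Then $h_{\vec u}^{(1)}=g_{\vec u}^{(1)}/(1-\epsilon)$ for $\vec u\notin\{\vec 0,\pm\vec w\}$ (and unchanged on $\{\vec 0,\pm\vec w\}$), which for small $\epsilon>0$ keeps $H_1>0$ by continuity and is manifestly LU-inequivalent to $G_1$ once $g_1\neq g_{\vec w}^1$. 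Your uniform-weight protocol is exactly the paper's treatment of the remaining case $g_1=g_{\vec w}^1$.
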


\begin{proof}
\emph{If:} We show that $g\ket{\psi}$, where $g$ is of the form given in the theorem, i.e., $g = g_1 \otimes g_{\vec{w}}^2 \otimes g_{\vec{w}}^3$, can always be converted into a LU-inequivalent state $h \ket{\psi} =  h_1 \otimes g_{\vec{w}}^2 \otimes g_{\vec{w}}^3 \ket{\psi}$ for some $h_1$. To show this we construct the LOCC protocol accomplishing this transformation. Consider  $\{  M_{\vec{0}}, M_{\vec{w}}, M_{-\vec{w}}  \}$, which we will later show is a valid POVM, where
\begin{align}
	M_{\vec{k}} = \sqrt{p_{\vec{k}}} h_1 S_{\vec{k}} \left( g_1\right)^{-1} \otimes S_{\vec{k}}^{\otimes 2}
\end{align}
and where $\{ p_{\vec{0}}, p_{\vec{w}}, p_{-\vec{w}}\}$ is a proper probability distribution, i.e., $p_{\vec{k}} \geq 0$ and $\sum_{\vec{k}} p_{\vec{k}} = 1$. It is LOCC as it can be implemented by party 1 measuring and party 2 and 3 applying (depending on the measurement outcome) local unitaries $S_{\vec{0}}$ or $S_{\pm \vec{w}}$. As $g_{\vec{w}}^{2}$ ($g_{\vec{w}}^{3}$) commutes with $S_{\pm \vec{w}}$ the protocol indeed transforms $g\ket{\psi}$ into $h\ket{\psi}$. We now confirm that for any $g$ we can find $h$ and $\{ p_{\vec{0}}, p_{\vec{w}}, p_{-\vec{w}}\}$ such that $\{  M_{\vec{0}}, M_{\vec{w}}, M_{-\vec{w}}  \}$ is a valid POVM, i.e. $\sum_{\vec{k}} M_{\vec{k}}^\dagger M_{\vec{k}} = \identity$. This is equivalent to
\begin{align}
	p_{\vec{0}} H_1 + p_{\vec{w}} S_{\vec{w}}^\dagger H_1 S_{\vec{w}} +  p_{-\vec{w}} S_{-\vec{w}}^\dagger H_1 S_{-\vec{w}} = G_1,
\end{align}
which is fulfilled iff [see Eq. (\ref{equ:validpovm})]
\begin{align}
\label{equ:validpovm1}
	 \eta_{\vec{u}} h_{\vec{u}}^{(1)} = g_{\vec{u}}^{(1)} \ \forall \vec{u}.
\end{align}
Note that here $\eta_{\vec{0}} = \eta_{\vec{w}} = \eta_{-\vec{w}} = 1$ as there are only three non vanishing probabilities $p_{\vec{0}}$, $p_{\vec{w}}$, and $p_{-\vec{w}}$.
Let us first consider the case where $g_1 = g_{\vec{w}}^1$. As we have seen in the proof of Theorem $\ref{theo:loccreach}$ we can transform this state to states of the form $h_1 \otimes g_{\vec{w}}^2 \otimes g_{\vec{w}}^3 \ket{\psi}$ by choosing $p_{\vec{0}} = p_{\vec{w}} = p_{-\vec{w}} = \frac{1}{3}$.
Let us now consider the case where $g_1\neq g_{\vec{w}}^1$.
 Obviously, we can find an operator $H$ satisfying the condition given in Eq. (\ref{equ:validpovm1}). However, we have to make sure that $H$ corresponds to a valid state, i.e., $H$ is positive.
To see that this is always possible consider a probability distribution $\{p_{\vec{0}}, p_{\vec{w}}, p_{-\vec{w}}\}$ with $p_{\vec{0}} = 1 - \frac{2\epsilon}{3}$, $p_{\vec{w}} = \frac{\epsilon}{3}$, $p_{-\vec{w}} = \frac{\epsilon}{3}$, where $\epsilon > 0$. With this probability distribution we obtain $\eta_{\vec{u}} = 1-\epsilon \ \forall \vec{u}$ with $\vec{u} \neq \vec{0} ,\pm \vec{w}$.
This choice of the probabilities increases the components of the initial state $g_{\vec{u}}^{(1)}\ \forall \vec{u}$ with $  \vec{u} \neq \vec{0},\pm \vec{w}$ by a factor of $\frac{1}{1-\epsilon}$.
Note that the expression for the eigenvalues of the operator $H$ is continuous in $\epsilon$. Using this as well as the fact that the operator $G$ is strictly positive, we obtain that given any state $g \ket{\psi}$ of the form above, there is always a probability distribution $\{p_{\vec{0}}, p_{\vec{w}}, p_{-\vec{w}}\}$ which leads to a small enough $\epsilon$ such that $H$ still corresponds to a valid state.
Hence, states of the form given in the theorem are convertible via LOCC.

\emph{Only if:} We now show that convertible states are necessarily of the form given in the theorem.
Due to Theorem \ref{theo:loccreach} we know that reachable states have to be of the form $h\ket{\psi}$ with $h = h_1 \otimes h_{\vec{w}}^2 \otimes h_{\vec{w}}^3$  for $ h_{\vec{w}}^i \in \Span{\identity, S_{\vec{w}}, S_{-\vec{w}}} $ and $h_1 \neq h^1_{\vec{w}}$. Thus any LOCC-convertible state $g \ket{\psi}$ can only be converted into states of this form.
By inserting $H_i$ ($G_i$) expressed in the generalized Pauli matrices $\{S_{\vec{k}}\}_{\vec{k}}$ as in Eq. (\ref{equ:pauli}) into
the necessary condition for SEP inter-convertibility given in Eq. (\ref{equ:gi}) we obtain
	\begin{align}
		 \eta_{\vec{k}} h_{\vec{k}}^{(i)} = g_{\vec{k}}^{(i)} \ \forall {\vec{k}}.
	\end{align}
Hence, a component of $G_i$ can only be non vanishing if the corresponding component of $H_i$ is non vanishing. Thus, $g$ has to be of the form given in the theorem, i.e., $g = g_1 \otimes g_{\vec{w}}^2 \otimes g_{\vec{w}}^3$, which completes the proof.
\end{proof}

Note that the set of convertible states via LOCC coincides with the set of convertible states via SEP. This is due to the fact that for any state that is reachable via SEP (see Theorem \ref{theo:sepreach}) it has been shown that they can only be reached via SEP from the seed states or from a state of the form given in Theorem \ref{theo:loccconv} (see proof of that theorem).
As mentioned above, the MES is of full measure.
However, with  Theorem \ref{theo:loccconv} we see that only a subset of the states in the MES are states which are convertible. They constitute a family defined by 10 real parameters of the form $g_{\vec{w}}^1 \otimes g_{\vec{w}}^2 \otimes g_{\vec{w}}^3 \ket{\psi}$ for some $\vec{w} \in \mathbb{Z}_3^2$ where no $g_{\vec{w}}^i\propto \one$ (except for the seed state). Note that all other states in the MES are isolated, i.e., they can neither be reached from, nor converted to any other LU-inequivalent state via LOCC. Although the MES is of full measure, the subset of non-isolated states, which are the physically more relevant ones, is of measure zero (as it was the case in the four qubit setting).

\section{Conclusion and Outlook}

In this paper we characterized the MES for generic three qutrit states. It turns out that as in the generic four qubit case deterministic LOCC transformations are hardly ever possible and almost all states are isolated. We identified the zero measure subset of convertible states in the MES. These states are the most relevant ones for deterministic entanglement manipulation.\\
In order to characterize the MES we studied first reachability under deterministic SEP transformations. Interestingly, we found a family of pure states that can only be reached via SEP from the corresponding seed state. We showed that this implies that these transformations can not be implemented via LOCC (even if one allows for infinitely many rounds of communication). Up to our knowledge these are the first examples of transformations among pure states that can be accomplished via SEP but not via LOCC. The existence of these transformations implies that the maximum success probability of reaching the final state of the transformation via LOCC, which is an entanglement monotone \cite{article:monotones}, increases under SEP \footnote{For similar quantities see \cite{article:chitambar09}}.\\
As the possibility for deterministic LOCC transformation is very rare it would be interesting to study the multi-copy case as well as the use of catalysts in order to see whether isolation still remains a generic feature.  Another interesting question would be to consider $\epsilon$-LOCC transformations, i.e., allowing for a small deviation in the final state. Our study of deterministic LOCC transformations and the tools used in this paper also open the way to determine the entanglement measures introduced in \cite{article:schwaiger15} for generic three qutrit states.\\ 

\section*{ACKNOWLEDGEMENTS}
The research was funded by the Austrian Science Fund (FWF): Y535-N16.

\appendix*
\section{Symmetries of the three qutrit seed states}	
\label{app:sym}

In this appendix, we will calculate the symmetries of the generic three qutrit seed state $\ket{\psi}$ given in Eq. (\ref{equ:seed}). Note that some parameter choices in the seed state are excluded as discussed in Section \ref{sec:seed}. In order to improve readability we recall here Lemma  \ref{lemma:symmetries}.
\\

\noindent\textit{ {\bf Lemma \ref{lemma:symmetries}.}
Let $\ket{\psi}$ be a generic three qutrit seed state. The symmetries of $\ket{\psi}$ are given by the so-called generalized Pauli matrices,
		\begin{align}		
				S(\ket{\psi})= &\left\{\identity_3^{\otimes 3}, (X)^{\otimes 3}, (X^2)^{\otimes 3}, (Z)^{\otimes 3}, (Z^2)^{\otimes 3}, \right.\nonumber\\
						 &\quad \left. (XZ)^{\otimes 3}, (XZ^2)^{\otimes 3}, (X^2Z)^{\otimes 3}, (X^2Z^2)^{\otimes 3}   \right\},
		\end{align}
		where
		\begin{align}
				X = \begin{pmatrix}
					 0 & 1 & 0 \\
					 0 & 0 & 1 \\
					 1 & 0 & 0
					 \end{pmatrix} 	
				\quad\text{and}\quad
				Z = \begin{pmatrix}
					 1 & 0 & 0 \\
					 0 & \exp{i\frac{2\pi}{3}} & 0 \\
					 0 & 0 & \exp{i\frac{4\pi}{3}}
					 \end{pmatrix}. 	
		\end{align}}

\begin{proof}
To prove this Lemma we will assume that $S$ is a symmetry of a generic seed state $\ket{\psi}$. We will write $S$ in terms of local operators as $S = A \otimes B \otimes C$ with $A, B, C \in  \operatorname{GL}(3, \mathbb{C})$. The condition for $S$ to be a symmetry is given by
\begin{align}
	\label{eq:sym}
	A \otimes B \otimes C \ket{\psi} = \ket{\psi}.
\end{align}
In the following, we will denote the matrix elements of  $A$, $B$, and $C$ in the computational basis by $A_{i j}$, $B_{i j}$, or $C_{i j}$ respectively.
We will now derive necessary conditions for $A \otimes B \otimes C $ to be a symmetry. Later on we will use these necessary conditions to prove that (up to normalization of $A$, $B$, and $C$) for generic seed states only $S_{\vec{k}}^{\otimes 3}$,  where $S_{\vec{k}}$ denotes the generalized Pauli matrices, can be symmetries.
Let us introduce the following bipartite states for party 2 and 3:
\begin{align}
\ket{\phi_1} &= c^*\ket{1 2} - b^* \ket{2 1} \nonumber\\
\ket{\phi_2} &= c^*\ket{2 0} - b^* \ket{0 2} \nonumber\\
\ket{\phi_3} &= c^*\ket{0 1} - b^* \ket{1 0} \nonumber\\
\ket{\phi_4} &= a^*\ket{1 2} - b^* \ket{0 0} \nonumber\\
\ket{\phi_5} &= a^*\ket{2 0} - b^* \ket{1 1} \nonumber\\
\ket{\phi_6} &= a^*\ket{0 1} - b^* \ket{2 2} \nonumber\\
\ket{\phi_7} &= a^*\ket{2 1} - c^* \ket{0 0} \nonumber\\
\ket{\phi_8} &= a^*\ket{0 2} - c^* \ket{1 1} \nonumber\\
\ket{\phi_9} &= a^*\ket{1 0} - c^* \ket{2 2}.
\end{align}
Note that they are all orthogonal to the seed state $\ket{\psi}$, i.e., for all $i$ we have $\phantom{.}_{23}\braket{\phi_i}{\psi}_{123}=0$.
 Projecting Eq. (\ref{eq:sym}) onto the states $\ket{\phi_i}_{23}$ yields
 \begin{align}
 	A\otimes\phantom{.}_{23}\braopket{\phi_i}{B \otimes C}{\psi}_{123} \propto \phantom{.}_{23}\braket{\phi_i}{\psi}_{123} = 0 \ \forall i,
 \end{align}
 which is equivalent to
 \begin{align}
 	\label{eq:projection}
 	\phantom{.}_{23}\braopket{\phi_i}{B \otimes C}{\psi}_{123} = 0 \ \forall i,
 \end{align}
as $A$ is an invertible operator. Note that Eq. (\theequation) is a set of 9 vector equations which are linear in entries of each of the matrices $B$ and $C$ and quadratic in the seed parameters $a$, $b$, and $c$. As the equations are linear in the entries of $C$, we can write these equations as
 \begin{align}
 	X  \begin{pmatrix}
					 \vec{c}_0  \\
					 \vec{c}_1  \\
					 \vec{c}_2
					 \end{pmatrix},
 \end{align}
where $\vec{c}_i^T$ is the $i$th row of the matrix $C$. Note that $X$ is of a special form as it can be written as
 \begin{align}
 	X =  \begin{pmatrix}
					 0 & -b M_2 & c M_1 \\
					 c M_2 & 0 & -b M_0 \\
					 -b M_1 & c M_0 & 0 \\
					 -b M_0 & 0 & a M_1 \\
					 a M_2 & -b M_1 & 0 \\
					 0 & a M_0 & -b M_2 \\
					 -c M_0 & a M_2 & 0 \\
					 0 & -c M_1 & a M_0 \\
					 a M_1 & 0 & -c M_2
					 \end{pmatrix} 	
 \end{align}
 with
  \begin{align}
 	M_i = \begin{pmatrix}
					 a & c & b \\
					 b & a & c \\
					 c & b & a
					 \end{pmatrix} \odot 	
		\begin{pmatrix}
					 B_{i 0} & B_{i 2} & B_{i 1} \\
					 B_{i 2} & B_{i 1} & B_{i 0} \\
					 B_{i 1} & B_{i 0} & B_{i 2}
					 \end{pmatrix}.
 \end{align}
 This leads to the following set of equations (excluding the non-generic choices for the seed parameters where $a=0$, $b=0$, or $c=0$)
 \begin{flalign}
 	\label{eq:symeqs}
 	M_0 \vec{c}_1 &= \frac{b}{c} M_1 \vec{c}_0 & M_1 \vec{c}_1 &= \frac{a}{b} M_2 \vec{c}_0 & M_2 \vec{c}_1 &= \frac{c}{a} M_0 \vec{c}_0 \nonumber \\
 	M_0 \vec{c}_2 &= \frac{c}{a} M_1 \vec{c}_1 & M_1 \vec{c}_2 &= \frac{b}{c} M_2 \vec{c}_1 & M_2 \vec{c}_2 &= \frac{a}{b} M_0 \vec{c}_1 \nonumber \\
 	M_0 \vec{c}_0 &= \frac{a}{b} M_1 \vec{c}_2 & M_1 \vec{c}_0 &= \frac{c}{a} M_2 \vec{c}_2 & M_2 \vec{c}_0 &= \frac{b}{c} M_0 \vec{c}_2.
 \end{flalign}
 Note that the three vectors $\vec{c}_0$, $\vec{c}_1$, and $\vec{c}_2$ form a basis of $\mathbb{C}^3$ as the matrix $C$ is invertible. Together with the set of equations given in Eq. (\theequation) we have that either all matrices $M_i$ are invertible or none of them is. To see this note that $\operatorname{range}(M_i) = \Span{M_i \vec{b}_0, M_i \vec{b}_1,M_i \vec{b}_2}$, where $\left\{ \vec{b}_0, \vec{b}_1, \vec{b}_2\right\}$ is an arbitrary basis. Hence, using the fact that  $\left\{\vec{c}_0, \vec{c}_1, \vec{c}_2 \right\}$ forms a basis and using the equations given in Eq. (\theequation) we have $\operatorname{range}(M_0) = \operatorname{range}(M_1) = \operatorname{range}(M_2)$. In particular, we have that $\operatorname{rk}(M_0) = \operatorname{rk}(M_1) = \operatorname{rk}(M_2)$, where $\operatorname{rk}$ denotes the rank. Hence, either all of them are invertible (have full rank) or none of them is.
 In the following, we will distinguish these two cases.
 \begin{enumerate}[(a)]
 \item None of the matrices $M_i$ is invertible.
 \item All of the matrices $M_i$ are invertible.
 \end{enumerate}

 Let us first consider case (a). We will make use of the adjugates of the matrices $M_i$ which are denoted by $\operatorname{adj}(M_i)$. Note that the adjugate of a matrix is always defined and can be constructed by calculating the minors of the matrix independently of whether a matrix is invertible or not. In case of the matrices $M_i$ we have
 \begin{widetext}
 \begin{align}
 \operatorname{adj}(M_i) =
		\begin{pmatrix}
					 a^2 B_{i 1} B_{i 2} - b c B_{i 0}^2 & b^2 B_{i 0} B_{i 1} - a c B_{i 2}^2 & c^2 B_{i 0} B_{i 2} - a b B_{i 1}^2 \\
					 c^2 B_{i 0} B_{i 1} - a b B_{i 2}^2 & a^2 B_{i 0} B_{i 2} - b c B_{i 1}^2 & b^2 B_{i 1} B_{i 2} - a c B_{i 0}^2 \\
					 b^2 B_{i 0} B_{i 2} - a c B_{i 1}^2 & c^2 B_{i 1} B_{i 2} - a b B_{i 0}^2 & a^2 B_{i 0} B_{i 1} - b c B_{i 2}^2
					 \end{pmatrix}.
 \end{align}
 \end{widetext}
Using that $\operatorname{adj}(M_i) M_i = \operatorname{det}(M_i) \identity$ together with Eq. (\ref{eq:symeqs}) and using the fact that $\left\{\vec{c}_0, \vec{c}_1, \vec{c}_2 \right\}$ forms a basis it has to hold that
 \begin{align}
 	\operatorname{adj}(M_i) M_j &= 0,
 \end{align}
 where $i, j \in \left\{0,1,2 \right\}$ and $i \neq j$.
 Excluding the cases where $a^3 + b^3 + c^3 = 0$ or $\left(a^3 + b^3 + c^3 \right)^3 = (3 a b c)^3$ we find that the equations we get by requiring that for all $i$ we have $\operatorname{det}(M_i) = 0$ together with the equations we get by taking the traces over Eq. (\theequation) contradicts the assumption that $B$ is invertible. It is thus, furthermore, a contradiction to the assumption that $S$ is a symmetry.

 Let us now consider case (b), where all matrices $M_i$ are invertible. We can thus multiply the first (second, third) three equations in Eq. (\ref{eq:symeqs}) with $M_0^{-1}$ ($M_1^{-1}$, $M_2^{-1}$) from the left respectively. As the vectors $\vec{c}_i$ form a basis, we can conclude that there exists a basis transformation $T$ such that
 \begin{align}
 T M_0^{-1} M_1 T^{-1} =  \begin{pmatrix}
					 0 & 0 & \frac{b}{a} \\
					 \frac{c}{b} & 0 & 0 \\
					 0 & \frac{a}{c} & 0
					 \end{pmatrix}, 	
 \end{align}
and the matrices  $T M_1^{-1} M_2 T^{-1}$ and $T M_2^{-1} M_0 T^{-1}$ are matrices where the non vanishing entries of the matrix given in Eq. (\theequation)  are cyclically permuted. Note that the eigenvalues of these matrices are $1$, $e^{i\frac{2\pi}{3}}$, and  $e^{-i\frac{2\pi}{3}}$. This implies that the eigenvalues of the matrices $M_0^{-1} M_1$, $M_1^{-1} M_2$, and $M_2^{-1} M_0$ are, independently of the normalization of $B$ and independently of the seed parameters, also $1$, $ e^{i\frac{2\pi}{3}}$, and  $e^{-i\frac{2\pi}{3}}$. Using the special form of the matrices $M_i$ in the computational basis and imposing the known form of the characteristic polynomial of $M_i^{-1}M_j$ (as we know the eigenvalues), one obtains a set of equations for the entries of $B$ independently of the entries of $C$. The conditions imposed on the matrix elements of $B$ by these equations are necessary for the matrix $B$ to be part of a symmetry. Excluding the parameter choices $a^3 + b^3 + c^3 = 0$ and $\left(a^3 + b^3 + c^3 \right)^3 = (3 a b c)^3$ the equations lead to one of the following two cases. We find that either, in case (b1), six of the entries in $B$ have to vanish while the remaining ones are (normalizing the matrix to $\operatorname{det}(B)=1$) taken out of the set $\left\{1, \omega, \omega^2 \right\}$, where $\omega =e^{i\frac{2\pi}{3}}$ or, in case (b2), the matrix $B$ is (normalizing the matrix to $B_{00} = 1$) one of the following 162 candidates
\begin{align}
 \begin{pmatrix}
					 1 & \omega^{i} & \omega^{j} \\
					 \omega^{k} & \omega^{k+i+m} & \omega^{k+j+2m} \\
					 \omega^{l} & \omega^{l+i+2m} & \omega^{l+j+m}
					 \end{pmatrix},
\end{align}
where $i,j,k,l \in \left\{0,1,2\right\}$ and $m \in \left\{1,2\right\}$.

Let us first consider the case where $B$ is of the form given in case (b1).
Excluding the parameter choices $a^9=b^9$, $a^9=c^9$, and $b^9=c^9$, using Eq. (\ref{eq:projection}), and assuming that $B$ is of the form given in case (b1) we have that $C \propto B$ and that only the generalized Pauli matrices remain as possible symmetries. It remains to derive the form of $A$. It is straightforward to verify that whenever  $A \otimes B \otimes C \ket{\psi} \propto \ket{\psi}$, we also have $B \otimes C \otimes A \ket{\psi} \propto \ket{\psi}$ due to the fact that $\ket{\psi}$ exhibits a cyclically particle permutation symmetry. Using this we find that in case (b1) $S = A \otimes B \otimes C \in \left\{S_i^{\otimes 3}\right\}_i$, where $S_i$ are the generalized Pauli matrices. Furthermore, it is straightforward to verify that they actually are symmetries of the seed state.

Let us now consider the case where $B$ is one of the candidates given in case (b2) [see Eq. (\theequation)]. Let us first derive necessary conditions for the entries of the matrix $C$. Obviously, the same arguments as above apply to derive necessary conditions for the matrix $C$. Thus, also $C$ has to be of one of the forms we derived for $B$, i.e., either six of the entries in $C$ vanish or $C$ is one of the candidates given in Eq. (\theequation). The former cannot be the case as this, as we have seen before, leads to the fact that $B \propto C$ contradicting the assumption that $B$ is of the form given in case (b2). Hence, also $C$ has to be one of the candidates given in Eq. (\theequation).
Using Eq. (\ref{eq:projection}) it is now straightforward to verify that none of the candidate combinations for matrices $B$ and $C$ turns out to be a symmetry as long as the parameter choices $a^3=b^3$, $a^3=c^3$, $b^3=c^3$, $a+b+c=0$, $a+ \omega b+c=0$, $a+ \omega^2 b+c=0$, $a+b+ \omega c=0$, $a+b+ \omega^2 c=0$, $a+\omega b+ \omega^2 c=0$, $a+\omega^2 b+ \omega c=0$, $a b + b c  + c a = 0$, $a b + \omega b c  + c a = 0$, $a b + \omega^2 b c  + c a = 0$, $a b + b c  + \omega c a = 0$, $a b + b c  + \omega^2 c a = 0$, $a b + \omega b c  + \omega^2 c a = 0$, and $a b + \omega^2 b c  + \omega c a = 0$ are excluded. Hence, we get no additional symmetries in case (b2). This proves the lemma.
\end{proof}

\bibliography{references}

\end{document}